\newtheorem{thm}{Theorem}
\newtheorem{lem}{Lemma}
\newtheorem{cor}{Corollary}
\theoremstyle{definition}
\newtheorem{definition}{Definition}
\newtheorem{ass}{Assumption}
\newcounter{excnt}
\title{Information Loss and Anti-Aliasing Filters in Multirate Systems}
\author{Bernhard C. Geiger,~\IEEEmembership{Student Member, IEEE}, and Gernot Kubin,~\IEEEmembership{Member, IEEE}%
\thanks{Bernhard C. Geiger (geiger@ieee.org) and Gernot Kubin are with the Signal Processing and Speech Communication Laboratory, Graz University of Technology.}%
\thanks{Parts of this work have been presented at the IEEE Forum on Signal Processing for RF-Systems 2013 and at the 2014 Int. Z\"urich Seminar on Communications.}}
\begin{document}
\newcounter{myTempCnt}

\newcommand{\x}[1]{x[#1]}
\newcommand{\y}[1]{y[#1]}
\newcommand{\qX}[1]{\hat{X}^{(#1)}}
\newcommand{\qS}[1]{\hat{S}^{(#1)}}
\newcommand{\qY}[1]{\hat{Y}^{(#1)}}
\newcommand{\Xt}[1]{\tilde{X}_{#1}}
\newcommand{\Yt}[1]{\tilde{Y}_{#1}}
\newcommand{\Si}[1]{S_{#1}}
\newcommand{\dft}[1]{F^{(#1)}}
\newcommand{\Xe}{X_{\epsilon}}

\newcommand{\overbar}[1]{\mkern 1.5mu\overline{\mkern-3mu#1\mkern-0.5mu}\mkern 1.5mu}
\newcommand{\ent}[1]{H(#1)}
\newcommand{\diffent}[1]{h(#1)}
\newcommand{\derate}[1]{\overbar{h}\left(\mathbf{#1}\right)}
\newcommand{\mutinf}[1]{I(#1)}
\newcommand{\ginf}[1]{I_G(#1)}
\newcommand{\kld}[2]{D(#1||#2)}
\newcommand{\kldr}[2]{\bar{D}(#1||#2)}
\newcommand{\kldrate}[2]{\bar{D}(\mathbf{#1}||\mathbf{#2})}
\newcommand{\binent}[1]{H_2(#1)}
\newcommand{\binentneg}[1]{H_2^{-1}\left(#1\right)}
\newcommand{\entrate}[1]{\overbar{H}(\mathbf{#1})}
\newcommand{\mutrate}[1]{\overbar{I}({\mathbf{#1}})}
\newcommand{\redrate}[1]{\bar{R}(\mathbf{#1})}
\newcommand{\pinrate}[1]{\vec{I}(\mathbf{#1})}
\newcommand{\loss}[2][\empty]{\ifthenelse{\equal{#1}{\empty}}{L(#2)}{L_{#1}(#2)}}
\newcommand{\lossrate}[2][\empty]{\ifthenelse{\equal{#1}{\empty}}{\overbar{L}(\mathbf{#2})}{\overbar{L}_{\mathbf{#1}}(\mathbf{#2})}}
\newcommand{\gain}[1]{G(#1)}
\newcommand{\atten}[1]{A(#1)}
\newcommand{\relLoss}[2][\empty]{\ifthenelse{\equal{#1}{\empty}}{l(#2)}{l_{#1}(#2)}}
\newcommand{\relLossrate}[1]{l(\mathbf{#1})}
\newcommand{\relTrans}[1]{t(#1)}
\newcommand{\partEnt}[2]{H^{#1}(#2)}
\newcommand{\redr}[1]{\bar{R}(#1)}
\newcommand{\entr}[1]{\bar{H}(#1)}

\newcommand{\dom}[1]{\mathcal{#1}}
\newcommand{\indset}[1]{\mathbb{I}\left({#1}\right)}
\newcommand{\LetterSet}[1]{\mathbb{#1}}
\newcommand{\suchthat}{|}

\newcommand{\unif}[2]{\mathcal{U}\left(#1,#2\right)}
\newcommand{\chis}[1]{\chi^2\left(#1\right)}
\newcommand{\chir}[1]{\chi\left(#1\right)}
\newcommand{\normdist}[2]{\mathcal{N}\left(#1,#2\right)}
\newcommand{\Proba}{\mathrm{Pr}}
\newcommand{\Prob}[1]{\Proba(#1)}
\newcommand{\Mar}[1]{\mathrm{Mar}(#1)}
\newcommand{\Qfunc}[1]{Q\left(#1\right)}
\newcommand{\pmeas}[1]{P_{#1}}
\newcommand{\lebesgue}{\lambda}
\newcommand{\pdf}[1]{f_{#1}}
\newcommand{\pmf}[1]{p_{#1}}
\newcommand{\cdf}[1]{F_{#1}}
\newcommand{\borel}[1]{\mathfrak{B}_{#1}}
\newcommand{\invmeas}{\mu}

\newcommand{\expec}[1]{\mathbb{E}\left(#1\right)}
\newcommand{\expecwrt}[2]{\mathbb{E}_{#1}\left(#2\right)}
\newcommand{\var}[1]{\mathrm{Var}\left\{#1\right\}}
\renewcommand{\det}{\mathrm{det}}
\newcommand{\cov}[1]{\mathrm{Cov}\left\{#1\right\}}
\newcommand{\sgn}[1]{\mathrm{sgn}\left(#1\right)}
\newcommand{\sinc}[1]{\mathrm{sinc}\left(#1\right)}
\newcommand{\e}[1]{\mathrm{e}^{#1}}
\newcommand{\multint}{\iint{\cdots}\int}
\newcommand{\modd}[3]{((#1))_{#2}^{#3}}
\newcommand{\quant}[1]{Q\left(#1\right)}
\newcommand{\card}[1]{\mathrm{card}(#1)}
\newcommand{\diam}[1]{\mathrm{diam}(#1)}
\newcommand{\rec}[1]{r(#1)}
\newcommand{\recmap}[1]{r_{\mathrm{MAP}}(#1)}
\newcommand{\recsub}[1]{r_{\mathrm{sub}}(#1)}
\renewcommand{\Re}{\mathfrak{R}}
\renewcommand{\Im}{\mathfrak{I}}
\newcommand{\LumpingFunction}{g}
\newcommand{\Iverson}[1]{\left[#1\right]}
\newcommand{\DFT}[1]{\mathsf{DFT}\left\{#1\right\}}
\newcommand{\IDFT}[1]{\mathsf{DFT}^{-1}\left\{#1\right\}}

\newcommand{\ivec}{\mathbf{i}}
\newcommand{\hvec}{\mathbf{h}}
\newcommand{\gvec}{\mathbf{g}}
\newcommand{\avec}{\mathbf{a}}
\newcommand{\kvec}{\mathbf{k}}
\newcommand{\fvec}{\mathbf{f}}
\newcommand{\vvec}{\mathbf{v}}
\newcommand{\xvec}{\mathbf{x}}
\newcommand{\zvec}{\mathbf{z}}
\newcommand{\Xvec}{\mathbf{X}}
\newcommand{\Zvec}{\mathbf{Z}}
\newcommand{\Xhvec}{\hat{\mathbf{X}}}
\newcommand{\xhvec}{\hat{\mathbf{x}}}
\newcommand{\xtvec}{\tilde{\mathbf{x}}}
\newcommand{\Yvec}{\mathbf{Y}}
\newcommand{\yvec}{\mathbf{y}}
\newcommand{\Svec}{\mathbf{S}}
\newcommand{\svec}{\mathbf{s}}
\newcommand{\Nvec}{\mathbf{N}}
\newcommand{\Pvec}{\mathbf{P}}
\newcommand{\Rvec}{\mathbf{R}}
\newcommand{\muvec}{\boldsymbol{\mu}}
\newcommand{\wvec}{\mathbf{w}}
\newcommand{\Wvec}{\mathbf{W}}
\newcommand{\Xmat}{\underline{\mathbf{X}}}
\newcommand{\Ymat}{\underline{\mathbf{Y}}}
\newcommand{\ymat}{\underline{\yvec}}
\newcommand{\Hmat}{\mathbf{H}}
\newcommand{\Amat}{\mathbf{A}}
\newcommand{\Fmat}{\mathbf{F}}
\newcommand{\Wh}{\underline{\hat{\mathbf{W}}}}
\newcommand{\Sh}{\underline{\hat{\boldsymbol{\Sigma}}}}
\newcommand{\Qvec}{\mathbf{Q}}
\newcommand{\nuvec}{\boldsymbol{\nu}}
\newcommand{\Vvec}{\mathbf{V}}
\newcommand{\Uvec}{\mathbf{U}}
\newcommand{\pivec}{\boldsymbol{\pi}}

\newcommand{\zerovec}{\mathbf{0}}
\newcommand{\eye}{\mathbf{I}}
\newcommand{\evec}{\mathbf{i}}

\newcommand{\zeroone}{\left[\begin{array}{c}\zerovec^T\\ \eye\end{array} \right]}
\newcommand{\zerooneT}{\left[\begin{array}{cc}\zerovec & \eye\end{array} \right]}
\newcommand{\zerooneM}{\left[\begin{array}{cc}\zerovec &\zerovec^T\\\zerovec& \eye\end{array} \right]}

\newcommand{\covmat}[1]{\underline{\mathbf{C}}_{#1}}
\newcommand{\hcovmat}[1]{\underline{\hat{\mathbf{C}}}_{#1}}
\newcommand{\Cxx}{\mathbf{C}_{XX}}
\newcommand{\Cx}{\covmat{X}}
\newcommand{\Chx}{\hcovmat{X}}
\newcommand{\Cy}{\covmat{Y}}
\newcommand{\Cz}{\mathbf{C}_{\Zvec}}
\newcommand{\Cn}{\mathbf{C}_{\mathbf{N}}}
\newcommand{\Cnt}{\underline{\mathbf{C}}_{\tilde{\mathbf{N}}}}
\newcommand{\Cntm}{\underline{\mathbf{C}}_{\tilde{\mathbf{N}}}}
\newcommand{\Cxh}{\mathbf{C}_{\hat{X}\hat{X}}}
\newcommand{\rxx}{\mathbf{r}_{XX}}
\newcommand{\Cxy}{\mathbf{C}_{XY}}
\newcommand{\Cyy}{\mathbf{C}_{YY}}
\newcommand{\Cnn}{\mathbf{C}_{NN}}
\newcommand{\Cyx}{\mathbf{C}_{YX}}
\newcommand{\Cygx}{\mathbf{C}_{Y|X}}
\newcommand{\Wmat}{\underline{\mathbf{W}}}

\newcommand{\Jac}[2]{\mathcal{J}_{#1}(#2)}

\newcommand{\NN}{{N{\times}N}}
\newcommand{\perr}{P_e}
\newcommand{\perh}{\hat{\perr}}
\newcommand{\pert}{\tilde{\perr}}

\newcommand{\vecind}[1]{#1_0^n}
\newcommand{\roots}[2]{{#1}_{#2}^{(i_{#2})}}
\newcommand{\rootx}[1]{x_{#1}^{(i)}}
\newcommand{\rootn}[2]{x_{#1}^{#2,(i)}}

\newcommand{\markkern}[1]{f_M(#1)}
\newcommand{\pole}{a_1}
\newcommand{\preim}[1]{g^{-1}[#1]}
\newcommand{\preimV}[1]{\mathbf{g}^{-1}[#1]}
\newcommand{\Xmax}{\bar{X}}
\newcommand{\Xmin}{\underbar{X}}
\newcommand{\xmax}{x_{\max}}
\newcommand{\xmin}{x_{\min}}
\newcommand{\limn}{\lim_{n\to\infty}}
\newcommand{\limk}{\lim_{k\to\infty}}
\newcommand{\limX}{\lim_{\hat{\Xvec}\to\Xvec}}
\newcommand{\limx}{\lim_{\hat{X}\to X}}
\newcommand{\limXo}{\lim_{\hat{X}_1\to X_1}}
\newcommand{\sumin}{\sum_{i=1}^n}
\newcommand{\finv}{f_\mathrm{inv}}
\newcommand{\ejtheta}{\e{\jmath\theta}}
\newcommand{\khat}{\bar{k}}
\newcommand{\modeq}[1]{g(#1)}
\newcommand{\partit}[1]{\mathcal{P}_{#1}}
\newcommand{\psd}[1]{S_{#1}(\e{\jmath \theta})}
\newcommand{\kpsd}[1]{S^*_{#1}(\e{\jmath \theta})}
\newcommand{\psdk}[2][\empty]{\ifthenelse{\equal{#1}{\empty}}{S_{#2}(\e{\jmath \theta_k})}{S_{#2}(\e{\jmath \theta_{#1}})}}
\newcommand{\kpsdk}[2][\empty]{\ifthenelse{\equal{#1}{\empty}}{S^*_{#2}(\e{\jmath \theta_k})}{S^*_{#2}(\e{\jmath \theta_{#1}})}}

\newcommand{\infodim}[1]{d(#1)}
\newcommand{\boxdim}[1]{d_B(#1)}
\newcommand{\partElement}[2]{\hat{\dom{X}}^{(#1)}_{#2}}
\newcommand{\partcard}{L}
\newcommand{\powerset}[1]{\mathfrak{P}(#1)}
\newcommand{\NoiseGain}{G}
\newcommand{\Lumping}{(\Pvec,\LumpingFunction)}
\newcommand{\Kappa}{\mathcal{K}}
\newcommand{\Iff}{\Leftrightarrow}
\newcommand{\Then}{\Rightarrow}
\newcommand{\ClassSingleEntry}{\mathsf{SE}}
\newcommand{\ClassSFS}[1]{\mathsf{SFS}(#1)}
\newcommand{\ClassKMarkov}{\mathsf{HMC}(k)}
\newcommand{\Set}[1]{\left\{#1\right\}}
\newcommand{\InstantsNonOverLappingTraversal}[3]{\mathcal{N}_{#1}^{\,#2}(#3)}
\newcommand{\InstantsOccupation}[3]{\mathcal{O}_{#1}^{\,#2}(#3)}
\newcommand{\InstantsTraversal}[3]{\mathcal{T}_{#1}^{\,#2}(#3)}
\newcommand{\InstantsHiddenTraversal}{\mathcal{I}}
\newcommand{\RealisationHiddenTraversal}{I}

\newcommand{\IntegerSet}[1]{{[#1]}}

\newcommand{\delay}[2]{\psblock(#1){#2}{\footnotesize$z^{-1}$}}
\newcommand{\Quant}[2]{\psblock(#1){#2}{\footnotesize$\quant{\cdot}$}}
\newcommand{\moddev}[2]{\psblock(#1){#2}{\footnotesize$\modeq{\cdot}$}}

\renewcommand{\mutrate}[1]{\overbar{I}({#1})}
\renewcommand{\lossrate}[2][\empty]{\ifthenelse{\equal{#1}{\empty}}{\overbar{L}(\mathbf{#2})}{\overbar{L}_{#1}(#2)}}
\renewcommand{\derate}[1]{\overbar{h}(#1)}
\renewcommand{\relLossrate}[1]{\overbar{l}(#1)}
\newcommand{\blocked}[2]{#1^{\mathit{(#2)}}}
\newcommand{\Mblocked}[1]{\blocked{#1}{M}}
\newcommand{\XvecMb}{\Mblocked{\Xvec}}
\newcommand{\XMb}{\Mblocked{X}}

\maketitle

\begin{abstract}
This work investigates the information loss in a decimation system, i.e., in a downsampler preceded by an anti-aliasing filter. It is shown that, without a specific signal model in mind, the anti-aliasing filter cannot reduce information loss, while, e.g., for a simple signal-plus-noise model it can. For the Gaussian case, the optimal anti-aliasing filter is shown to coincide with the one obtained from energetic considerations. For a non-Gaussian signal corrupted by Gaussian noise, the Gaussian assumption yields an upper bound on the information loss, justifying filter design principles based on second-order statistics from an information-theoretic point-of-view.
\end{abstract}

\section{Introduction}
Multi-rate systems are ubiquitously used in digital systems to increase (upsample) or decrease (downsample) the rate at which a signal is processed. Especially downsampling is a critical operation since it can introduce aliasing, like sampling, and thus can cause information loss. Standard textbooks on signal processing deal with this issue by recommending an anti-aliasing filter prior to downsampling -- resulting in a cascade which is commonly known as a decimator~\cite[Ch.~4.6]{Oppenheim_Discrete3}. In these books, this anti-aliasing filter is usually an ideal low-pass filter with a cut-off frequency of $\pi/M$, for an $M$-fold decimation system (cf.~Fig.~\ref{fig:downsampling}). Unser~\cite{Unser_OptimalFilters} showed that this choice is optimal in terms of the mean-squared reconstruction error (MSE) only if the input process is such that the passband portion of its power spectral density (PSD) exceeds all aliased components.  Similarly, as it was shown by Tsatsanis and Giannakis~\cite{Tsatsanis_PCFilterBank}, the filter minimizing the MSE is piecewise constant, $M$-aliasing-free (i.e., the aliased components of the $M$-fold downsampled frequency response do not overlap), and has a passband depending on the PSD of the input process. Specifically, the filter which permits most of the energy to pass aliasing-free is optimal in the MSE sense.

In this paper we consider a design objective vastly different from the MSE: information. The fact that information, compared to energy, can yield more successful system designs has long been recognized, e.g., for (non-linear) adaptive filters~\cite{Principe_ITLearning} or for state estimation using linear filters~\cite{Galdos_InfoFilterDesign}. Mutual information has been used as a design objective for transceiver filter design, too: In~\cite{Al-Dhahir_BlockTransmission}, Al-Dhahir et al. derived a sub-optimal block transmission filter whose output approximates the optimal input statistics of a dispersive, noisy channel. Scaglione et al.~\cite{Scaglione_FBs} later showed that the resulting non-stationarity of the channel input can be achieved by an FIR filter bank, both for FIR and ARMA channels. Recently, Chen et al.~\cite{Chen_NyquistShannon} derived the capacity of sub-Nyquist sampled additive Gaussian noise channels for various sampling mechanisms: sampling after a filter, a filterbank, and a modulated filterbank. They showed that the capacity-maximizing sampling filter is piecewise constant and both maximizes the signal-to-noise ratio and minimizes the MSE of the reconstructed signal, thus building a bridge between information-theoretic and energetic filter design. All these works, however, consider either a signal-plus-noise model or assume that all processes are Gaussian.

We extend the existing literature in three different aspects: First, we present results for the case where no signal model is available, other than the PSD of the Gaussian input process (Section~\ref{sec:relativeLoss}). Second, we derive the optimal filter for a signal model in which the Gaussian filter input is correlated with a Gaussian signal process representing relevant information (Section~\ref{sec:relevantGauss}). And finally, we consider a signal-plus-Gaussian-noise scenario, where we assume that the signal is non-Gaussian (Section~\ref{sec:relevantGeneral}).

%
%
%

\begin{figure}[t]
\centering
 \begin{pspicture}[showgrid=false](-8,-2.5)(-2,-1)
  \psset{style=RoundCorners,style=Arrow,gratioWh=1.2}
	\pssignal(-8,-2){X}{$\Xvec$}
	\psfblock(-6.25,-2){H0}{$H$}
	\psdsampler(-3.75,-2){DO1}{M}
	\pssignal(-2,-2){y2}{$\Yvec$}
	\nclist{ncline}[naput]{X,H0,DO1 $\tilde{\Xvec}$,y2}
 \end{pspicture}
\caption{Decimation system consisting of a linear filter $H$ and an $M$-fold downsampler.}
\label{fig:downsampling}
\end{figure}

Our first result is surprising: Given mild assumptions on the input process of the decimation system, the information loss can be bounded \emph{independently} of the anti-aliasing filter (see Section~\ref{sec:relativeLoss}). The reason is that, without a specific signal model, every bit of the input process is treated equivalently, regardless of the amount of energy by which it is represented. In order to remedy this counter-intuitivity, Section~\ref{sec:relevantGauss} considers Gaussian processes with a specific signal model in mind: The input to the decimation system is correlated with a relevant data signal. A data signal corrupted by Gaussian noise is a special case of this scenario, thus connecting to the analysis of additive Gaussian noise channels in~\cite{Chen_NyquistShannon}. The optimal filter is shown to be piecewise constant and conceptually similar to those derived in~\cite{Chen_NyquistShannon,Tsatsanis_PCFilterBank}. Since in most cases the Gaussian assumption is too restrictive, in Section~\ref{sec:relevantGeneral} we let the decimator input be an arbitrarily distributed data signal corrupted by Gaussian noise. Following the approach of Plumbley in~\cite{Plumbley_TN}, we prove that the Gaussian assumption for the signal process yields an upper bound on the information loss in the general case. In other words, designing a filter based on the PSDs of the signal and noise processes guarantees a bounded information loss in the decimation system. This justifies filter design based on second-order statistics, i.e., on energetic considerations, also from an information-theoretic perspective. In Section~\ref{sec:example} we illustrate our results in a simple toy example. The problem of designing optimal FIR filters for the decimation system, as it appears in the toy example, is briefly discussed in Section~\ref{sec:FIR}, which also contains an outlook to future work.

\section{Preliminaries and Notation}\label{sec:preliminaries}
Throughout this work we adopt the following notation: $\Zvec$ is a real-valued random process, whose $n$-th sample is the random variable (RV) $Z_n$. Let $Z_\mathbb{J}:=\{Z_i:i\in\mathbb{J}\}$ and we abbreviate $Z_i^j:=\{Z_i, Z_{i+1},\dots,Z_j\}$. The differential entropy~\cite[Ch.~8]{Cover_Information2} and the R\'{e}nyi information dimension~\cite{Renyi_InfoDim} of $Z_\mathbb{J}$ are $\diffent{Z_\mathbb{J}}$ and $\infodim{Z_\mathbb{J}}$, respectively, provided these quantities exist and are finite. Finally, we define the $M$-fold blocking $\Zvec^{(M)}$ of $\Zvec$ as the sequence of $M$-dimensional RVs $Z^{(M)}_1:=Z_1^M$, $Z^{(M)}_2:=Z_{M+1}^{2M}$, and so on. Since the systems used in this work are static, i.e., described by a function, we abuse notation and understand $g(Z_\mathbb{J})$ and $g(\Zvec)$ as the function applied coordinate-wise or sample-wise, respectively.

In this work, we often consider a process $\Zvec$ satisfying
\begin{ass}\label{ass:processes}
 $\Zvec$ is stationary, has finite variance, finite marginal differential entropy $\diffent{Z_n}$, and finite differential entropy rate
\begin{equation}
 \derate{\Zvec} := \limn \frac{1}{n} \diffent{Z_1^n} = \limn \diffent{Z_n|Z_1^{n-1}}.
\end{equation}
\end{ass}

\begin{lem}[Finite differential entropy (rate) and information dimension]\label{lem:DEInfoDim}
 Let $\Zvec$ be a stationary stochastic process satisfying Assumption~\ref{ass:processes}. Then, for every finite set $\LetterSet{J}$, $\infodim{Z_\LetterSet{J}}=\card{\LetterSet{J}}$.
\end{lem}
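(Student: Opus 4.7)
The plan is to verify the hypotheses of R\'{e}nyi's theorem on information dimension~\cite{Renyi_InfoDim}: an $\mathbb{R}^k$-valued random variable $X$ with an absolutely continuous distribution and $\ent{\lfloor X\rfloor}<\infty$ has $\infodim{X}=k$. Specialized to $X=Z_\LetterSet{J}$ with $k=\card{\LetterSet{J}}$, the task splits into (a) showing $\diffent{Z_\LetterSet{J}}$ is finite (which in turn implies $Z_\LetterSet{J}$ has a density), and (b) bounding $\ent{\lfloor Z_\LetterSet{J}\rfloor}$.

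For (a) I would first treat contiguous blocks $Z_1^n$. By the chain rule $\diffent{Z_1^n}=\sum_{i=1}^n\diffent{Z_i\suchthat Z_1^{i-1}}$, and for a stationary process the sequence $\diffent{Z_i\suchthat Z_1^{i-1}}$ is non-increasing in $i$ with limit $\derate{\Zvec}$. Assumption~\ref{ass:processes} pins both endpoints down: the first term equals $\diffent{Z_1}$, which is finite, and the limit $\derate{\Zvec}$ is finite. Hence every conditional term lies in the finite interval $[\derate{\Zvec},\diffent{Z_1}]$, making $\diffent{Z_1^n}$ finite for all $n$. For an arbitrary finite $\LetterSet{J}$ set $m=\max\LetterSet{J}-\min\LetterSet{J}+1$; stationarity identifies $Z_\LetterSet{J}$ with a sub-block of $Z_1^m$, and the decomposition $\diffent{Z_1^m}=\diffent{Z_\LetterSet{J}}+\diffent{Z_{\LetterSet{J}^c}\suchthat Z_\LetterSet{J}}$ together with the crude upper bound $\diffent{Z_{\LetterSet{J}^c}\suchthat Z_\LetterSet{J}}\leq (m-\card{\LetterSet{J}})\diffent{Z_1}$ produces a finite lower bound on $\diffent{Z_\LetterSet{J}}$, while subadditivity of differential entropy delivers the matching upper bound $\diffent{Z_\LetterSet{J}}\leq\card{\LetterSet{J}}\diffent{Z_1}$.

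For (b) I would invoke the finite-variance assumption, which guarantees $\expec{\lvert Z_n\rvert}<\infty$, hence $\expec{\lvert \lfloor Z_n\rfloor\rvert}<\infty$. The standard maximum-entropy bound for integer-valued random variables under a first-moment constraint then yields $\ent{\lfloor Z_n\rfloor}<\infty$, and subadditivity of discrete entropy extends this to $\ent{\lfloor Z_\LetterSet{J}\rfloor}\leq\sum_{j\in\LetterSet{J}}\ent{\lfloor Z_j\rfloor}<\infty$. Combined with (a), R\'{e}nyi's theorem delivers $\infodim{Z_\LetterSet{J}}=\card{\LetterSet{J}}$.

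The most delicate point is the lower bound on $\diffent{Z_\LetterSet{J}}$: differential entropy is not monotone under marginalization, so neither finite $\diffent{Z_1}$ nor finite $\derate{\Zvec}$ suffices in isolation. What closes the argument is that for stationary processes the conditional differential entropies $\diffent{Z_i\suchthat Z_1^{i-1}}$ form a monotone sequence, which sandwiches every term between the two finite bounds supplied by Assumption~\ref{ass:processes} and thereby controls the joint entropy of every finite block and, via the sub-block identification, of every finite index set.
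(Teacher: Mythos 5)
Your proof is correct and follows essentially the same route as the paper: sandwich $\diffent{Z_\LetterSet{J}}$ between finite bounds via the chain rule and the monotone convergence of $\diffent{Z_n\suchthat Z_1^{n-1}}$ to $\derate{\Zvec}$, obtain $\ent{\lfloor Z_\LetterSet{J}\rfloor}<\infty$ from the finite-variance assumption, and conclude with R\'{e}nyi's characterization of the information dimension of absolutely continuous distributions. Your detour through the complement $Z_{\LetterSet{J}^c}$ for non-contiguous index sets is a slightly more roundabout (but valid) way to get the finite lower bound that the paper obtains directly as $\card{\LetterSet{J}}\derate{\Zvec}\le\diffent{Z_\LetterSet{J}}$ by chaining over the elements of $\LetterSet{J}$ and conditioning on the full past.
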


\begin{IEEEproof}
See Appendix~\ref{proof:DEInfoDim}.
\end{IEEEproof}

As another direct consequence of Assumption~\ref{ass:processes}, the mutual information rate with a process $\Wvec$ jointly stationary with $\Zvec$ exists and equals~\cite[Thm.~8.3]{Gray_Entropy}
\begin{equation}
 \mutrate{\Zvec;\Wvec} := \limn \frac{1}{n} \mutinf{Z_1^n;W_1^n}.
\end{equation}

To measure the rate of information loss in a deterministic system, we introduce
\begin{definition}[Relative Information Loss Rate]\label{def:relativerate}
The relative information loss rate induced by the function $g$ is
 \begin{equation}
 \relLossrate{\Zvec\to g(\Zvec)} := \limn \relLoss{(Z_1^n\to g(Z_1^n)}=\limn \frac{\infodim{Z_1^n|g(Z_1^n)}}{\infodim{Z_1^n}}~\label{eq:relativerate}
\end{equation}
provided the quantity on the right exists.
\end{definition}

This definition is an extension of the \emph{relative information loss} $\relLoss{Z\to g(Z)}$, as defined in~\cite{Geiger_RILPCA_arXiv}, to stochastic processes. Roughly speaking, $\relLoss{Z\to g(Z)}$ captures the \emph{percentage} of information lost by applying the function $g$ to the RV $Z$. That the relative information loss is related to the (conditional) information dimension was also observed in~\cite{Geiger_RILPCA_arXiv}, where the second equality in~\eqref{eq:relativerate} was proved.

Clearly, in an invertible system no information is lost. One drawback of Definition~\ref{def:relativerate} is that it is only defined for static systems. Hence, it will be necessary to abuse notation by presenting
\begin{ass}\label{ass:invertible}
Let $\Zvec$ be the input process and $\tilde{\Zvec}$ the output process of a static system. If $\hat{\Zvec}$ is another process which is equivalent to $\Zvec$ in the sense that there exists a (not necessarily static) invertible system which converts one to the other, then
\begin{subequations}
 \begin{equation}
 \relLossrate{\Zvec\to\tilde{\Zvec}}=\relLossrate{\hat{\Zvec}\to\tilde{\Zvec}}.
\end{equation}
Likewise, if $\hat{\Zvec}$ is equivalent to $\tilde{\Zvec}$, then
\begin{equation}
 \relLossrate{\Zvec\to\tilde{\Zvec}}=\relLossrate{\Zvec\to\hat{\Zvec}}.
\end{equation}
\end{subequations}
\end{ass}
In particular, a polyphase decomposition or a perfect reconstruction filterbank decomposition of a process is equivalent to the original process in above sense.

In some cases not the total information lost in the system is of interest, but only the portion $W$ which is relevant to the user. Hence, in~\cite{Geiger_Relevant_arXiv}, the notion of \emph{relevant information loss} was introduced as the difference between mutual informations:
\begin{equation}
 \loss[W]{Z\to g(Z)} := \mutinf{W;Z}-\mutinf{W;g(Z)}
\end{equation}

This notion is extended to stochastic processes in
\begin{definition}[Relevant Information Loss Rate]\label{def:relevantrate}
Let $\Wvec$ be a process jointly stationary with $\Zvec$, representing the relevant information content of $\Zvec$. Then, the information loss rate relevant w.r.t. $\Wvec$ induced by processing $\Zvec$ to a jointly stationary process $\tilde{\Zvec}$ is
\begin{equation}
 \lossrate[\Wvec]{\Zvec\to \tilde{\Zvec}} := \mutrate{\Wvec;\Zvec}-\mutrate{\Wvec;\tilde{\Zvec}} \label{eq:relevantrate}
\end{equation}
provided the quantities exist.
\end{definition}

As a specific example, $\Wvec$ might be the sign of $\Zvec$, or $\Zvec$ might be a noisy observation of $\Wvec$. Note further that in Definition~\ref{def:relevantrate} it is not necessary to assume that the system is static; if it was, then $\tilde{\Zvec}=g(\Zvec)$, as in Definition~\ref{def:relativerate}.

Considering the scenario depicted in Fig.~\ref{fig:downsampling}, we will be concerned with linear filters and their effect on the information-carrying processes. If the filter is stable and causal\footnote{In addition to stability and causality, its impulse response must not vanish completely in order that the Paley-Wiener condition is satisfied. We will make this mild assumption throughout the rest of the work.}, its magnitude response satisfies the Paley-Wiener condition (cf.~\cite[p.~215]{Papoulis_Fourier} and~\cite[p.~423]{Papoulis_Probability}):
\begin{equation}
  \frac{1}{2\pi} \int_{-\pi}^\pi \ln |H(\e{\jmath\theta})| d\theta > -\infty.\label{eq:Paley_Wiener}
\end{equation}
For such filters, the following two lemmas can be presented:
\begin{lem}\label{lem:ass_filters}
 Let $\Zvec$ be a stochastic process satisfying Assumption~\ref{ass:processes} and let $H$ be a stable, causal linear filter with input $\Zvec$. Then, the output process $\tilde{\Zvec}$ of the filter satisfies Assumption~\ref{ass:processes}.
\end{lem}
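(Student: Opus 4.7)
The plan is to verify each of the four requirements in Assumption~\ref{ass:processes} for $\tilde{\Zvec}$ in turn. Stationarity follows immediately from the LTI nature of $H$ applied to the stationary process $\Zvec$. For finite variance, stability of $H$ gives $\sum_k |h[k]| < \infty$, so $|H(\e{\jmath\theta})|$ is uniformly bounded by some $B<\infty$, and the standard spectral bound yields $\var{\tilde{Z}_n} \le B^2 \var{Z_n} < \infty$; the series $\tilde{Z}_n = \sum_{k\ge 0} h[k] Z_{n-k}$ converges absolutely a.s.\ and in $L^2$ under these same conditions. Without loss of generality I may assume $h[0]\neq 0$, because any leading string of vanishing taps is just a pure time delay of a filter $H'$ with $h'[0]\neq 0$, and time shifts preserve stationarity, variance, marginal differential entropy, and differential entropy rate alike.

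For finite marginal differential entropy, the upper bound is Gaussian maximum entropy: $\diffent{\tilde{Z}_n} \le \frac{1}{2}\log(2\pi e \var{\tilde{Z}_n}) < \infty$. For the lower bound, decompose $\tilde{Z}_n = h[0] Z_n + V_n$ where $V_n := \sum_{k\ge 1} h[k] Z_{n-k}$ is measurable with respect to $Z_{-\infty}^{n-1}$. Because conditioning reduces entropy and the translation-scaling rule preserves conditional differential entropy,
\begin{equation*}
\diffent{\tilde{Z}_n} \ge \diffent{\tilde{Z}_n \mid Z_{-\infty}^{n-1}} = \diffent{Z_n \mid Z_{-\infty}^{n-1}} + \log|h[0]|.
\end{equation*}
Under Assumption~\ref{ass:processes}, stationarity together with monotone convergence of $\diffent{Z_n\mid Z_{n-k}^{n-1}}$ identifies the conditional entropy on the right with $\derate{\Zvec}$, which is finite; hence $\diffent{\tilde{Z}_n}$ is finite on both sides.

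The entropy-rate argument is the block version of the same trick. Given the infinite past $Z_{-\infty}^{0}$, the block $\tilde{Z}_1^N$ is an affine function of $Z_1^N$ whose Jacobian is a lower-triangular Toeplitz matrix with $h[0]$ on the diagonal, so the change-of-variables formula gives
\begin{equation*}
\diffent{\tilde{Z}_1^N \mid Z_{-\infty}^{0}} = \diffent{Z_1^N \mid Z_{-\infty}^{0}} + N\log|h[0]| = N\bigl(\derate{\Zvec}+\log|h[0]|\bigr),
\end{equation*}
where the second equality uses the chain rule and the stationary identity $\diffent{Z_n \mid Z_{-\infty}^{n-1}} = \derate{\Zvec}$. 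Conditioning reduces entropy, so $\tfrac{1}{N}\diffent{\tilde{Z}_1^N} \ge \derate{\Zvec}+\log|h[0]|$ for every $N$, and passing to the limit (which exists for the stationary output $\tilde{\Zvec}$ by the standard monotonicity of $\diffent{\tilde{Z}_n\mid\tilde{Z}_1^{n-1}}$) yields $\derate{\tilde{\Zvec}} \ge \derate{\Zvec}+\log|h[0]| > -\infty$. The companion upper bound $\derate{\tilde{\Zvec}} \le \diffent{\tilde{Z}_n} < \infty$ is immediate from the marginal bound. The only nontrivial point is the justification of $\diffent{Z_n\mid Z_{-\infty}^{n-1}} = \derate{\Zvec}$ via convergence of conditional differential entropies, but this is standard once finite marginal and finite entropy-rate are in hand; the Paley--Wiener hypothesis enters only implicitly, ensuring that $H$ has at least one nonzero tap so that the reduction to the case $h[0]\neq 0$ is never vacuous.
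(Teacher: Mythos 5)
Your proof is correct in substance but takes a genuinely different route from the paper's. The paper works in the frequency domain: it invokes the Paley--Wiener theorem to lower-bound $\frac{1}{2\pi}\int_{-\pi}^{\pi}\ln|H(\e{\jmath\theta})|\,d\theta$, Jensen's inequality (via the noise gain) to upper-bound it, and then reads off $\derate{\tilde{\Zvec}}=\derate{\Zvec}+\frac{1}{2\pi}\int_{-\pi}^{\pi}\ln|H(\e{\jmath\theta})|\,d\theta$ from Papoulis's formula for the entropy rate of a filtered process; finite variance comes from a time-domain autocorrelation bound $r_{\tilde{Z}\tilde{Z}}[0]\le\sigma^2(\sum_k|h[k]|)^2$, and the marginal entropy is sandwiched exactly as you do. You instead stay in the time domain: after stripping a pure delay to get $h[0]\neq 0$, you condition on the infinite past so that $\tilde{Z}_1^N$ becomes a nonsingular lower-triangular affine image of $Z_1^N$, yielding $\derate{\tilde{\Zvec}}\ge\derate{\Zvec}+\ln|h[0]|$. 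This is weaker than the paper's identity --- consistently so, since $\ln|h[0]|\le\frac{1}{2\pi}\int_{-\pi}^{\pi}\ln|H(\e{\jmath\theta})|\,d\theta$ by Jensen's formula for a causal transfer function --- but finiteness is all the lemma needs, and your argument is more self-contained in that it never cites the spectral entropy-rate formula (which the paper reuses later, e.g.\ in Appendix~A). The one step you should not wave away is the identification $\diffent{Z_n|Z_{-\infty}^{n-1}}=\derate{\Zvec}$ together with the existence of the relevant conditional densities: the clean way to justify it is via $I(Z_n;Z_{n-k}^{n-1})\uparrow I(Z_n;Z_{-\infty}^{n-1})=\diffent{Z_n}-\derate{\Zvec}<\infty$ (continuity of mutual information along increasing $\sigma$-fields, e.g.\ Pinsker), which simultaneously guarantees that the conditional law of $Z_n$ given the infinite past is absolutely continuous. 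With that made explicit, your proof stands.
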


\begin{IEEEproof}
See Appendix~\ref{proof:ass_filters}.
\end{IEEEproof}

\begin{lem}\label{lem:relevant_filters}
 Let $\Wvec$ and $\Zvec$ be two jointly stationary stochastic processes satisfying Assumption~\ref{ass:processes}, and let $H$ be a stable, causal linear filter with input $\Zvec$. Then, for $\tilde{\Zvec}$ being the output process of the filter,
\begin{equation}
 \mutrate{\Wvec;\Zvec} = \mutrate{\Wvec;\tilde{\Zvec}}.
\end{equation}
\end{lem}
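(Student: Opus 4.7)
The plan is to prove the equality by establishing both inequalities via the data processing inequality (DPI), exploiting that the Paley-Wiener condition~\eqref{eq:Paley_Wiener} makes the filter $H$ information-losslessly invertible.

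For the direction $\mutrate{\Wvec;\tilde{\Zvec}} \leq \mutrate{\Wvec;\Zvec}$: since $H$ is stable and causal, each output sample $\tilde{Z}_n = \sum_{k\geq 0} h_k Z_{n-k}$ exists as an $L^2$ limit under Assumption~\ref{ass:processes}, so $\tilde{Z}_1^n$ is a measurable function of the one-sided past $Z_{-\infty}^n$. DPI then yields $\mutinf{W_1^n;\tilde{Z}_1^n} \leq \mutinf{W_1^n;Z_{-\infty}^n}$, and after normalizing by $n$ and invoking the stationary-process rate theorem (used already in the excerpt, cf.~\cite[Thm.~8.3]{Gray_Entropy}) the boundary past $Z_{-\infty}^0$ contributes $o(n)$, giving the desired bound. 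For the reverse inequality, the Paley-Wiener condition implies $H(\e{\jmath\theta})\neq 0$ almost everywhere, so $1/H(\e{\jmath\theta})$ defines a valid inverse spectral operator $G$ with $G(\tilde{\Zvec}) = \Zvec$ in the mean-square sense. Hence $Z_n$ is a measurable functional of the bi-infinite $\tilde{\Zvec}$-trajectory, DPI again produces $\mutinf{W_1^n;Z_1^n} \leq \mutinf{W_1^n;\tilde{Z}_{-\infty}^{\infty}}$, and the symmetric rate argument closes the loop.

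The main obstacle is making the inversion step rigorous, because Paley-Wiener only guarantees that $1/H$ is a sensible spectral function — not that $G$ has a causal, or even absolutely summable, impulse response. I would handle this by approximating $G$ with a sequence of FIR truncations $G_L$ whose outputs $Z_n^{(L)}$ converge to $Z_n$ in mean-square (using Parseval together with the finite variance of $\tilde{\Zvec}$ granted by Lemma~\ref{lem:ass_filters}), and then passing to the limit via lower semicontinuity of mutual information under convergence in distribution. A cleaner alternative is the minimum-phase/all-pass factorization $H = H_{\mathrm{mp}}\cdot H_{\mathrm{ap}}$: the minimum-phase factor admits a causal, stable inverse (so its effect on $\mutrate{\Wvec;\cdot}$ is handled directly by two-sided DPI), and the all-pass factor is unitary on the spectrum, preserving the joint law of $(\Wvec,\tilde{\Zvec})$ up to a measurable bijection and thus preserving mutual information rate trivially.
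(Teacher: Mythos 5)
Your two-sided data-processing strategy is a genuinely different route from the paper's. The paper never inverts $H$: it restricts to jointly Gaussian $(\Wvec,\Zvec)$, invokes Pinsker's spectral formula $\mutrate{\Zvec;\Wvec}=-\frac{1}{4\pi}\int_{-\pi}^{\pi}\ln\left(1-|\rho_{ZW}(\e{\jmath\theta})|^2\right)d\theta$, and notes that the coherence $|\rho_{ZW}|^2=|\psd{ZW}|^2/(\psd{Z}\psd{W})$ is unchanged by the filter because $|H(\e{\jmath\theta})|^2$ cancels from numerator and denominator wherever $H\neq 0$, i.e., almost everywhere by Paley--Wiener (the non-Gaussian case is then dispatched by a brief appeal to the fact that filters shift differential entropy rates independently of the statistics). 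Your forward inequality is fine in spirit, but you assert rather than prove that conditioning on the semi-infinite past $Z_{-\infty}^{0}$ perturbs $\mutinf{W_1^n;\cdot}$ only by $o(n)$; for jointly stationary pairs this is exactly the kind of statement that needs the Gray/Pinsker machinery rather than a one-line appeal to it.

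The reverse direction is where the genuine gaps sit, and neither of your proposed repairs closes them as stated. The minimum-phase/all-pass factorization fails because the minimum-phase factor of a stable, causal, Paley--Wiener filter need not have a stable inverse: $H(z)=1+z^{-1}$ satisfies every hypothesis of the lemma (and is precisely the filter used in the paper's examples), is its own minimum-phase factor, yet $1/H$ has a pole on the unit circle, so ``the minimum-phase factor admits a causal, stable inverse'' is false in exactly the generality the lemma requires. The FIR-truncation route is closer to workable --- $Z_n$ does lie in the closed linear span of $\{\tilde{Z}_k\}$, since $\int_{-\pi}^{\pi}|1/H|^2\psd{\tilde{Z}}\,d\theta=\int_{-\pi}^{\pi}\psd{Z}\,d\theta<\infty$ --- but the truncations $G_L$ have supports of length $N_L\to\infty$, so after invoking lower semicontinuity at fixed $n$ you are left bounding $\frac{1}{n}\mutinf{W_1^n;\tilde{Z}_{-N_L}^{n+N_L}}$ with an observation window that outgrows the normalization; the interchange of the limits in $L$ and $n$ is precisely the step you have not supplied. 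To salvage the DPI route you would need to prove directly that $\sigma(\tilde{\Zvec})=\sigma(\Zvec)$ modulo null sets (the spectral-span argument above does this) and then separately establish $\limn\frac{1}{n}\mutinf{W_1^n;\Zvec}=\mutrate{\Wvec;\Zvec}$ for jointly stationary pairs --- or simply compute the rate spectrally as the paper does.
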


\begin{proof}
See Appendix~\ref{proof:relevant_filters}.
\end{proof}

As the previous lemma suggests, the output process of a stable, causal linear filter is equivalent to its input process in the sense of Assumption~\ref{ass:invertible}. Moreover, combining this lemma with Definition~\ref{def:relevantrate} one can see that filtering a process with a stable, causal filter does not destroy information:
\begin{equation}
 \lossrate[\Wvec]{\Zvec\to \tilde{\Zvec}} = \mutrate{\Wvec;\Zvec}-\mutrate{\Wvec;\tilde{\Zvec}} = 0.
\end{equation}

\section{Relative Information Loss in a Downsampler}\label{sec:relativeLoss}
Consider the scenario depicted in Fig.~\ref{fig:downsampling}, where $\Xvec$ satisfies Assumption~\ref{ass:processes}. If the filter $H$ is stable and causal, so does $\tilde{\Xvec}$. To analyze the information loss rate in the downsampling device, we employ the relative information loss rate,
\begin{equation}
 \relLossrate{\tilde{\Xvec}^{(M)}\to\Yvec} = \limn \relLoss{(\tilde{X}^{(M)})_1^n\to Y_1^n}~\label{eq:rellossrate:downsampling}
\end{equation}
where we applied $M$-fold blocking to ensure that the mapping between $(\tilde{X}^{(M)})_1^n$ and $Y_1^n$ is static. Downsampling, $Y_n:=\tilde{X}_{nM}$, is now a projection to a single coordinate, hence~\cite{Geiger_RILPCA_arXiv}
\begin{equation}\label{eq:lossrate:downsampler}
 \relLoss{(\tilde{X}^{(M)})_1^n \to Y_1^n}=\frac{\infodim{(\tilde{X}^{(M)})_1^n|Y_1^n}}{\infodim{(\tilde{X}^{(M)})_1^n}}=\frac{n(M-1)}{nM}.
\end{equation}
If the filter $H$ is stable and causal and, thus, has no influence on the information content of the stochastic process, we can use Assumption~\ref{ass:invertible} in $(a)$ below and combine~\eqref{eq:rellossrate:downsampling} with~\eqref{eq:lossrate:downsampler} to
\begin{equation}
 \relLossrate{\Xvec^{(M)}\to\Yvec}\stackrel{(a)}{=}\relLossrate{\tilde{\Xvec}^{(M)}\to\Yvec} = \frac{M-1}{M}.
\end{equation}
The amount of information lost in the decimation system in Fig.~\ref{fig:downsampling} is the same for all stable, causal filters $H$.

The question remains whether an \emph{ideal} anti-aliasing filter can prevent information loss, since it guarantees that the downsampling operation is invertible. To show that the answer to this question is negative, take, for example, the ideal low-pass filter recommended in standard textbooks~\cite[Ch.~4.6]{Oppenheim_Discrete3}:
\begin{equation}
 H(\e{\jmath\theta}) = 
\begin{cases}
	1,& \text{ if } |\theta|<\frac{\pi}{M}\\
	0,& \text{ else}
\end{cases}
\end{equation}
We decompose $\Xvec$ in an $M$-channel filterbank: The $k$-th channel is characterized by analysis and synthesis filters being constant in the frequency band $(k-1)/M\le |\theta|<k/M$ and zero elsewhere. Let $\Yvec_{k}$ be the ($M$-fold downsampled) process in the $k$-th channel --- clearly, $\Yvec\equiv\Yvec_{1}$. It can be shown that every $\Yvec_{k}$ satisfies Assumption~\ref{ass:processes} if $\Xvec$ is Gaussian (cf.~proof of Theorem~\ref{thm:aauseless}). Thus we obtain
\begin{equation}
 \relLossrate{\Xvec^{(M)}\to\Yvec}\stackrel{(a)}{=}\relLossrate{{\Yvec}_{1},\dots,{\Yvec}_{M}\to\Yvec_{1}} = \frac{M-1}{M}
\end{equation}
where the information is again lost in a projection and where $(a)$ is due to Assumption~\ref{ass:invertible} since the filterbank decomposition is invertible. The ideal anti-aliasing low-pass filter prevents information from being lost in the downsampler by \emph{destroying information itself}.

If the filter $H$ is a cascade of a causal, stable filter and of one with a piecewise-constant transfer function (with less trivial intervals as pass-bands), the analysis still holds; Information is either lost in the filter or in the downsampler:
\begin{thm}\label{thm:aauseless}
For a Gaussian process $\Xvec$ satisfying Assumption~\ref{ass:processes}, the relative information loss rate in the decimation system depicted in Fig.~\ref{fig:downsampling} satisfies
\begin{equation}
 \relLossrate{\Xvec^{(M)}\to\Yvec} \ge \frac{M-1}{M}
\end{equation}
for every anti-aliasing filter $H$ with finitely many pass-band intervals.
\end{thm}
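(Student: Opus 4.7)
The overall intuition is simple: the downsampler output $Y_1^n$ takes values in $\mathbb{R}^n$, so $\infodim{Y_1^n}\le n$, whereas by Lemma~\ref{lem:DEInfoDim} the blocked input satisfies $\infodim{X_1^{nM}}=nM$. The lower bound $(M-1)/M$ then emerges from the dimensional deficit $nM-n=n(M-1)$, with Gaussianity invoked to supply the corresponding chain rule for information dimension via the rank-of-covariance characterization.

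I would first reduce to a purely piecewise-constant filter. By hypothesis $H$ factors as the cascade of a stable, causal filter $H_{c}$ and a piecewise-constant filter $H_{pc}$ with finitely many pass-band intervals. Lemma~\ref{lem:ass_filters} shows that $H_{c}(\Xvec)$ again satisfies Assumption~\ref{ass:processes}, and its Paley--Wiener invertibility makes it equivalent to $\Xvec$ in the sense of Assumption~\ref{ass:invertible}. Consequently $\relLossrate{\Xvec^{(M)}\to\Yvec}$ is unchanged when $H$ is replaced by $H_{pc}$, so for the remainder of the argument I would assume $H=H_{pc}$.

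With this reduction, Lemma~\ref{lem:DEInfoDim} gives $\infodim{X_1^{nM}}=nM$. Gaussianity of $\Xvec$ together with linearity of the system make $(X_1^{nM},Y_1^n)$ jointly Gaussian, and for jointly Gaussian vectors the (conditional) information dimension equals the rank of the (conditional) covariance matrix. Since $Y_1^n\in\mathbb{R}^n$, we get $\infodim{Y_1^n}\le n$. Applying the elementary inequality $\mathrm{rank}(A-B)\ge\mathrm{rank}(A)-\mathrm{rank}(B)$ to the Schur-complement expression $C_{XX}-C_{XY}C_{YY}^{+}C_{YX}$ for the conditional covariance yields $\infodim{X_1^{nM}\mid Y_1^n}\ge nM-n=n(M-1)$. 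Therefore, for every $n$,
\[
\relLoss{(X^{(M)})_1^n\to Y_1^n}=\frac{\infodim{(X^{(M)})_1^n\mid Y_1^n}}{\infodim{(X^{(M)})_1^n}}\ge\frac{n(M-1)}{nM}=\frac{M-1}{M},
\]
and passing to the limit $n\to\infty$ gives the stated bound.

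The principal technical obstacle is this conditional-covariance step, because $Y_1^n$ depends on the entire sample path of $\Xvec$ through the filter's impulse response, not only on $X_1^{nM}$, so some care is needed in interpreting $C_{XY}$ and in transferring the joint-Gaussian chain rule to the block level. A clean way around this is to replicate the constructive filterbank argument already sketched in the paragraph preceding the theorem for the ideal low-pass case: decompose $\Xvec$ by an $L$-channel uniform perfect-reconstruction filterbank, with $L$ a multiple of $M$ chosen finely enough that each sub-band lies entirely within either a pass-band or a stop-band of $H_{pc}$, which is possible because $H_{pc}$ has only finitely many pass-band intervals. The decomposition is invertible, so by Assumption~\ref{ass:invertible} it preserves $\relLossrate{\cdot}$; the stop-band sub-bands are annihilated by $H_{pc}$, the pass-band sub-bands survive intact, and the subsequent downsampler acts as a coordinate projection discarding an $(M-1)/M$ fraction of the remaining dimensions. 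Both routes yield the same bound.
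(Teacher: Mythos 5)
Your second route is the paper's own strategy (a uniform filterbank fine enough to resolve the pass-bands, elimination of stop-band channels, polyphase analysis of the downsampler, and the corollary on rank-deficient linear maps), but as written it omits the step that carries the weight of the proof. After the invertible filterbank decomposition, the bound $\relLossrate{\tilde{\Yvec}\to\hat{\Yvec}}\ge (M-1)/M$ needs the \emph{denominator} $\infodim{(\tilde Y_1)_1^n,\dots,(\tilde Y_{LM})_1^n}$ to equal the full $nLM$; the numerator bound $\le nL$ is trivial. Equality in the denominator does not follow from invertibility of the filterbank alone (the decomposition is not a static map on finite blocks, so one cannot simply transport $\infodim{X_1^{nLM}}=nLM$ across it), and this is exactly where Gaussianity enters the paper's argument: finiteness of $\derate{\Xvec}$ forces $\psd{X}>0$ a.e., whence each sub-band process has finite differential entropy and entropy rate and therefore information dimension $n$ per block by Lemma~\ref{lem:DEInfoDim}, and the sub-band processes are mutually \emph{independent} because they are uncorrelated and jointly Gaussian, so their information dimensions add. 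Your second route never invokes Gaussianity, which is a symptom of the missing step. A second, smaller gap: a uniform $L$-channel bank with band edges at multiples of $\pi/L$ cannot be aligned with the pass-band boundaries of $H_{pc}$ if one of those boundaries is irrational, no matter how large $L$ is; the paper closes this case by sandwiching $H_{pc}$ between two rational-endpoint filters that destroy respectively more and less information, both of which obey the same $nL$ bound on the output information dimension.

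Your first route, as you concede, does not compute the quantity in the theorem: $Y_1^n$ is not a static function of $X_1^{nM}$ (the filter has memory), so Definition~\ref{def:relativerate} and the identity $\relLoss{Z\to g(Z)}=\infodim{Z|g(Z)}/\infodim{Z}$ do not apply to the pair $(X_1^{nM},Y_1^n)$. The Schur-complement rank estimate does bound a conditional information dimension of jointly Gaussian vectors, but you would still have to show that this is the quantity appearing in $\relLossrate{\Xvec^{(M)}\to\Yvec}$ as the paper defines it, namely via Assumption~\ref{ass:invertible} and reduction to static maps. So the proposal identifies the correct architecture but leaves the two substantive steps --- the full information dimension of the sub-band vector, and the irrational-endpoint case --- unproved.
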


\begin{IEEEproof}
 See Appendix~\ref{proof:aauseless}.
\end{IEEEproof}

The reason for this seemingly counter-intuitive result is that, without a specific signal model, the amount of information is not necessarily proportional to the amount of energy by which it is represented: There is no reason to prefer a specific frequency band over another. This in some sense parallels our result on the relative information loss in principal components analysis (PCA), where we showed that PCA cannot reduce the amount of information being lost in reducing the dimensionality of the data~\cite{Geiger_RILPCA_arXiv}.

\section{Relevant Information Loss: Gaussian Case}\label{sec:relevantGauss}
To remove the counter-intuitivity of the previous section, we adapt the signal model: Let $\Svec$ and $\Xvec$ be jointly stationary Gaussian processes with PSDs $\psd{S}$ and $\psd{X}$, respectively, with cross PSD $\psd{SX}$, and which satisfy Assumption~\ref{ass:processes}. The information loss rate relevant w.r.t. $\Svec$ is given by
\begin{equation}
 \lossrate[\Svec^{(M)}]{\Xvec^{(M)}\to \Yvec} = \mutrate{\Svec^{(M)};\Xvec^{(M)}}-\mutrate{\Svec^{(M)};\Yvec} \label{eq:relevantrate:downsampling}
\end{equation}
and measures how much of the information $\Xvec$ conveys about $\Svec$ is lost for each output sample due to downsampling.

While in the general case the filter which minimizes $\lossrate[\Svec^{(M)}]{\Xvec^{(M)}\to \Yvec}$ is hard to find, for this Gaussian signal model the solution is surprisingly intuitive:

\begin{definition}[Optimal Energy Compaction Filter~{\cite[Thm.~4]{Vaidyanathan_OptimalFilterBank}}]\label{def:energycompaction}
 The optimal energy compaction filter $H$ for an $M$-fold downsampler and for a given PSD $\psd{X}$ satisfies 
\begin{equation}
 H(\e{\jmath\theta_l})=\begin{cases}
              1, & \text{ for smallest $l$ s.t. } \forall k: \psdk[l]{X}\ge\psdk{X}\\
			0, &\text{ else}
             \end{cases}
\end{equation}
where $\theta_k:=\frac{\theta-2k\pi}{M}$.
\end{definition}

The energy compaction filter for a given PSD can be constructed easily: The $M$-fold downsampled PSD consists of $M$ aliased components; for each frequency point $\theta\in[-\pi/M,\pi/M]$, at least one of them is maximal. The pass-bands of the energy compaction filter correspond to exactly these maximal components~\cite{Tsatsanis_PCFilterBank,Unser_OptimalFilters,Vaidyanathan_OptimalFilterBank}.

\begin{thm}\label{thm:energycompaction}
Let $\Svec$ and $\Xvec$ be jointly stationary Gaussian processes satisfying Assumption~\ref{ass:processes} and having PSDs $\psd{S}$ and $\psd{X}$. Let further the cross PSD $\psd{SX}$ be such that
\begin{equation}
 \int_{-\pi}^\pi \ln\left(\psd{S}\psd{X}-|\psd{SX}|^2\right) d\theta > -\infty.
\end{equation}
Then, the $M$-aliasing-free energy compaction filter for 
\begin{equation}
 \frac{|\psd{SX}|^2 }{\psd{S}\psd{X}-|\psd{SX}|^2}\label{eq:thmEnergyCompaction}
\end{equation}
 minimizes the information loss rate relevant w.r.t. $\Svec$ in the decimation system depicted in Fig.~\ref{fig:downsampling}.
\end{thm}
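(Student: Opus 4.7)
The plan is to reformulate the problem as a pointwise-in-frequency maximization by splitting $\Xvec$ into a component determined by $\Svec$ plus an orthogonal innovation, then reducing the filter choice to the selection of the best alias at each downsampled frequency. Since $\mutrate{\Svec^{(M)};\Xvec^{(M)}}$ does not depend on $H$, minimizing the relevant loss rate is equivalent to maximizing $\mutrate{\Svec^{(M)};\Yvec}$. The first step is a Wiener-type decomposition $\Xvec=\hat{\Xvec}+\Nvec$, where $\hat{\Xvec}$ is the non-causal linear MMSE estimate of $\Xvec$ from $\Svec$ and $\Nvec$ is the orthogonal residual. Both are jointly stationary Gaussian; $\hat{\Xvec}$ is a deterministic linear function of $\Svec$, while $\Nvec$ is independent of $\Svec$. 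Their PSDs are $\psd{\hat{X}}=|\psd{SX}|^2/\psd{S}$ and $\psd{N}=(\psd{S}\psd{X}-|\psd{SX}|^2)/\psd{S}$, so the ratio $\psd{\hat{X}}/\psd{N}$ equals the quantity $\gamma(\theta)$ in \eqref{eq:thmEnergyCompaction}. The hypothesis $\int \ln(\psd{S}\psd{X}-|\psd{SX}|^2)\,d\theta>-\infty$ makes $\ln\psd{N}$ integrable, so $\Nvec$ satisfies Assumption~\ref{ass:processes}.

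By linearity of filtering and downsampling, $\Yvec=\hat{\Yvec}+\Nvec_Y$, where $\hat{\Yvec}$ and $\Nvec_Y$ are obtained by passing $\hat{\Xvec}$ and $\Nvec$ through the decimator. Since $\hat{\Yvec}$ is determined by $\Svec^{(M)}$ while $\Nvec_Y$ is independent of $\Svec^{(M)}$, the Gaussian mutual-information-rate formula for the jointly stationary vector processes at the downsampled rate yields
\begin{equation*}
\mutrate{\Svec^{(M)};\Yvec}=\frac{1}{4\pi}\int_{-\pi}^\pi \ln\frac{\psd{Y}}{\psd{N_Y}}\,d\theta.
\end{equation*}
The standard $M$-fold aliasing identity gives $\psd{Y}=\frac{1}{M}\sum_{k=0}^{M-1}|H(\e{\jmath\theta_k})|^2\psdk{X}$ and $\psd{N_Y}=\frac{1}{M}\sum_{k=0}^{M-1}|H(\e{\jmath\theta_k})|^2\psdk{N}$, with $\theta_k$ as in Definition~\ref{def:energycompaction}.

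For any nonnegative weights $w_k$ and positive $a_k,b_k$, one has $(\sum_k w_k a_k)/(\sum_k w_k b_k)\le \max_k (a_k/b_k)$, with equality iff the weights vanish off the argmax set. Applying this with $a_k=\psdk{X}$ and $b_k=\psdk{N}$, so that $a_k/b_k=1+\gamma(\theta_k)$, gives
\begin{equation*}
\mutrate{\Svec^{(M)};\Yvec}\le \frac{1}{4\pi}\int_{-\pi}^\pi \ln\bigl(1+\max_k\gamma(\theta_k)\bigr)\,d\theta,
\end{equation*}
with equality precisely when, for almost every $\theta$, $|H(\e{\jmath\cdot})|^2$ is supported only on the single alias $\theta_{k^\star(\theta)}$ at which $\gamma$ is maximal among the $M$ aliases. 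This describes exactly the $M$-aliasing-free energy compaction filter of Definition~\ref{def:energycompaction} applied to $\gamma$, proving the claim.

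The principal technical obstacle is justifying the spectral formula in the second paragraph. One must verify that all relevant processes ($\Nvec$, $\hat{\Xvec}$, $\Nvec_Y$, $\hat{\Yvec}$, $\Yvec$, and their joint counterparts with $\Svec$) satisfy Assumption~\ref{ass:processes}, so that their differential entropy rates exist and coincide with the Kolmogorov spectral integrals, and one must check that $\derate{\Yvec|\Svec^{(M)}}$ indeed collapses to $\derate{\Nvec_Y}$ despite the non-causality of the Wiener decomposition and the sampling-rate mismatch between $\Svec$ and $\Yvec$. The theorem's integrability hypothesis on $\psd{S}\psd{X}-|\psd{SX}|^2$, together with Lemma~\ref{lem:ass_filters}, is precisely what makes these steps rigorous.
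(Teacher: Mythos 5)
Your proof is correct in substance but follows a genuinely different route from the paper's. The paper blocks $\Svec$ into its $M$ polyphase components, invokes Pinsker's formula for the $(M{+}1)$-dimensional joint spectral density matrix, and then grinds through Cauchy's determinant expansion and the diagonalization $\Amat_S=\Wvec\boldsymbol{\Lambda}\Wvec^H$ to show that the integrand equals $\ln\bigl(\psd{Y}/(\psd{Y}-\svec_Y^H\Amat_S^{-1}\svec_Y)\bigr)$. Your Wiener decomposition $\Xvec=\hat{\Xvec}+\Nvec$ obtains the same quantity conceptually: the term $\svec_Y^H\Amat_S^{-1}\svec_Y$ that the paper computes by brute force is exactly $\frac{1}{M}\sum_k|H(\e{\jmath\theta_k})|^2\,|\psdk{SX}|^2/\psdk{S}$, i.e.\ the aliased PSD of your filtered-and-downsampled $\hat{\Xvec}$, so $\det\Amat_{SY}/\det\Amat_S=\psd{N_Y}$ and your integrand coincides with the paper's. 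Your approach effectively reduces the general correlated case to the independent signal-plus-noise structure of Corollary~\ref{cor:chen}, which is more illuminating; the final pointwise maximization (your mediant inequality versus the paper's weighted-average argument with weights $\alpha(\theta_k)|H(\e{\jmath\theta_k})|^2$) is identical in both. The one caveat is the step you yourself flag: the scalar formula $\mutrate{\Svec^{(M)};\Yvec}=\frac{1}{4\pi}\int\ln(\psd{Y}/\psd{N_Y})\,d\theta$ involves processes at different sampling rates, and $\hat{Y}_1^n$ depends on samples of $\Svec$ outside $\{1,\dots,nM\}$, so one cannot simply write $h(Y_1^n|S_1^{nM})=h((N_Y)_1^n)$ for finite $n$. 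Making this rigorous is precisely what the paper's $(M{+}1)$-dimensional Pinsker computation accomplishes (after verifying the regularity condition $\int\ln|\det\Amat_S|\,d\theta>-\infty$); a complete version of your argument would either have to invoke that multivariate formula anyway, identifying $\det\Amat_{SY}/\det\Amat_S$ with your $\psd{N_Y}$, or supply a limiting argument showing the edge effects vanish. So the idea is sound and arguably cleaner, but the deferred step is not a routine formality --- it is where the paper spends most of its effort.
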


\begin{IEEEproof}
See Appendix~\ref{proof:energycompaction}.
\end{IEEEproof}

The condition imposed on the cross PSD ensures that the two-dimensional process $(\Svec,\Xvec)$ is regular in the sense of~\cite{Pinsker_InfoEngl}; in particular, it excludes Gaussian processes being linearly dependent, e.g., where $\Xvec$ is obtained by filtering $\Svec$.

The presented theorem admits an interesting
\begin{cor}\label{cor:chen}
 Let $\Svec$ and $\Nvec$ be independent, jointly stationary Gaussian processes satisfying Assumption~\ref{ass:processes} and having PSDs $\psd{S}$ and $\psd{N}$. Let $X_n=S_n+N_n$. Then, the $M$-aliasing-free energy compaction filter for $\psd{S}/\psd{N}$ minimizes the information loss rate relevant w.r.t. $\Svec$ in the decimation system depicted in Fig.~\ref{fig:downsampling}.
\end{cor}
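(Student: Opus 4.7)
The plan is to reduce the corollary to a direct application of Theorem~\ref{thm:energycompaction} after specializing the cross PSD to the additive-noise model. First, I would verify that the pair $(\Svec,\Xvec)$ satisfies the hypotheses of Theorem~\ref{thm:energycompaction}: joint Gaussianity and joint stationarity are preserved under the linear map $(\Svec,\Nvec)\mapsto(\Svec,\Svec+\Nvec)$, and Assumption~\ref{ass:processes} carries over to $\Xvec$ because the sum of two independent Gaussian processes with finite variance and finite entropy rate has the same properties; in particular, $\psd{S}+\psd{N}\ge\psd{S}$ pointwise, which ensures $\int_{-\pi}^\pi\ln\psd{X}\,d\theta>-\infty$ and hence a finite differential entropy rate of $\Xvec$.

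The key computation is to identify the spectra appearing in~\eqref{eq:thmEnergyCompaction} under the signal-plus-noise model. By independence (and zero-mean-ness, which may be assumed without loss of generality) of $\Svec$ and $\Nvec$, the cross-correlation satisfies $\expec{S_n X_{n+k}}=\expec{S_n S_{n+k}}+\expec{S_n N_{n+k}}=R_S[k]$ and the autocorrelation of $\Xvec$ satisfies $R_X[k]=R_S[k]+R_N[k]$; consequently $\psd{SX}=\psd{S}$ (which is real and nonnegative) and $\psd{X}=\psd{S}+\psd{N}$. Plugging these identities into~\eqref{eq:thmEnergyCompaction} gives
\begin{equation*}
\frac{|\psd{SX}|^2}{\psd{S}\psd{X}-|\psd{SX}|^2}=\frac{\psd{S}^2}{\psd{S}(\psd{S}+\psd{N})-\psd{S}^2}=\frac{\psd{S}}{\psd{N}},
\end{equation*}
so the optimal filter promised by Theorem~\ref{thm:energycompaction} is precisely the $M$-aliasing-free energy compaction filter for $\psd{S}/\psd{N}$, as claimed.

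Finally, the integrability side-condition of Theorem~\ref{thm:energycompaction} simplifies, under the same substitution, to $\int_{-\pi}^\pi \ln(\psd{S}\psd{N})\,d\theta=\int_{-\pi}^\pi \ln\psd{S}\,d\theta+\int_{-\pi}^\pi \ln\psd{N}\,d\theta>-\infty$. Both summands are finite because, for a stationary Gaussian process, the finiteness of the differential entropy rate imposed by Assumption~\ref{ass:processes} is equivalent (via Kolmogorov's formula for the one-step prediction error) to the Paley--Wiener-type condition $\int_{-\pi}^\pi \ln\psd{\cdot}\,d\theta>-\infty$. No substantive obstacle is anticipated: the corollary is essentially a specialization in which the general cross-spectral expression in~\eqref{eq:thmEnergyCompaction} collapses to the familiar spectral signal-to-noise ratio, and the regularity hypothesis on the cross PSD in Theorem~\ref{thm:energycompaction} is automatically satisfied.
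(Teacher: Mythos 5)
Your proposal is correct and follows exactly the route of the paper, whose entire proof is the observation that independence gives $\psd{SX}=\psd{S}$ and $\psd{X}=\psd{S}+\psd{N}$, so that~\eqref{eq:thmEnergyCompaction} collapses to $\psd{S}/\psd{N}$. Your additional verification of the side condition $\int_{-\pi}^{\pi}\ln(\psd{S}\psd{N})\,d\theta>-\infty$ via Kolmogorov's formula is a sound (and welcome) elaboration that the paper leaves implicit.
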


\begin{IEEEproof}
Due to independence, $\psd{SX}=\psd{S}$ and $\psd{X}=\psd{S}+\psd{N}$.
\end{IEEEproof}

The energy compaction filter minimizing the relevant information loss rate thus maximizes the SNR at each frequency. In particular, since for white Gaussian noise $\Nvec$ the energy compaction filter for $\psd{S}/\psd{N}$ coincides with the energy compaction filter for $\psd{S}$, the filter that lets most of the signal's energy pass aliasing-free is also optimal in terms of information.

Also the energy compaction filter of Theorem~\ref{thm:energycompaction} in some sense maximizes the SNR, if one interprets the numerator of~\eqref{eq:thmEnergyCompaction} as the signal, and the denominator as the noise component.

Corollary~\ref{cor:chen} also connects tightly to~\cite{Chen_NyquistShannon}, in which Chen et al. analyzed the capacity of sub-Nyquist sampled, continuous-time additive Gaussian noise channels with frequency response $H_{\mathrm{channel}}(f)$. They showed that the capacity of the channel depends on the (continuous-time) anti-aliasing filter $H_c(f)$, and that the maximizing filter is the energy compaction filter for $|H_{\mathrm{channel}}(f)|^2/S_N(f)$, where $S_N(f)$ is the PSD of the continuous-time noise process~\cite[Thm.~3]{Chen_NyquistShannon}.

\section{Relevant Information Loss: Non-Gaussian Signal plus Gaussian Noise}\label{sec:relevantGeneral}
Although the result for Gaussian processes is interesting due to its closed form, it is of little practical relevance. In many cases, at least the relevant part of $\Xvec$, the data signal process $\Svec$, is non-Gaussian. We thus drop the restriction that $\Svec$ is Gaussian. For the result presented below, we have to assume a signal-plus-Gaussian-noise model, i.e., we assume that $\Xvec$ is the sum of $\Svec$ and an independent Gaussian noise process $\Nvec$.

One can expect that in this case a closed-form solution for $H$ will not be available. Assuming that $\Svec$ is Gaussian yields an upper bound on the information rate $\mutrate{\Svec^{(M)};\Yvec}$. While this upper bound is of little use for filter design (it does not make sense to maximize an upper bound on the information rate), it can also be shown that the Gaussian assumption provides an upper bound on the relevant information loss rate. To this end, we employ the approach of Plumbley~\cite{Plumbley_TN}, who showed that, with a specific signal model, PCA can be justified from an information-theoretic perspective (cf.~also~\cite{Geiger_Relevant_arXiv}).
\begin{thm}\label{thm:gaussbound}
Let $H$ be stable and causal, let $\Svec$ and $\Nvec$ be independent, jointly stationary and satisfy Assumption~\ref{ass:processes}, and let $X_n=S_n+N_n$. $\Nvec$ is Gaussian, and $\Svec_G$ is Gaussian with the same PSD as $\Svec$. Let $X_{G,n}=S_{G,n}+N_n$, and let $\Yvec_G$ be the corresponding output process of the decimation system, respectively. Then,
\begin{equation}
 \lossrate[\Svec^{(M)}]{\Xvec^{(M)}\to \Yvec} \le \lossrate[\Svec^{(M)}_G]{\Xvec^{(M)}_G\to \Yvec_G}.
\end{equation}
\end{thm}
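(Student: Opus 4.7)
The plan is to mirror Plumbley's reduction: rewrite both relevant loss rates as linear combinations of differential entropy rates, identify the Gaussian--non-Gaussian gap with a KL-divergence rate, and close the argument by data processing applied to the (deterministic) filter-and-downsample map.

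First, using independence of $\Svec$ and $\Nvec$ together with $\Xvec=\Svec+\Nvec$, one has $\diffent{X_1^n|S_1^n}=\diffent{N_1^n}$, so dividing by $n$ and passing to the limit gives $\mutrate{\Svec;\Xvec}=\derate{\Xvec}-\derate{\Nvec}$ and hence $\mutrate{\Svec^{(M)};\Xvec^{(M)}}=M(\derate{\Xvec}-\derate{\Nvec})$. Letting $\tilde{\Svec}$ and $\tilde{\Nvec}$ denote the filter-and-downsamples of $\Svec$ and $\Nvec$, so that $\Yvec=\tilde{\Svec}+\tilde{\Nvec}$, the fact that $\tilde{\Svec}$ is a deterministic function of $\Svec$ while $\tilde{\Nvec}\perp\Svec$ yields analogously $\mutrate{\Svec^{(M)};\Yvec}=\derate{\Yvec}-\derate{\tilde{\Nvec}}$. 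Since $\Nvec$ (and therefore $\tilde{\Nvec}$) is untouched in the Gaussianized model, the same two identities hold for $\Svec_G,\Xvec_G,\Yvec_G$; subtracting them collapses the target inequality to
\[
\derate{\Yvec_G}-\derate{\Yvec}\;\le\;M\bigl(\derate{\Xvec_G}-\derate{\Xvec}\bigr).
\]

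Next, since $\Svec_G$ matches the PSD of $\Svec$ and $\Nvec$ is common to both models, $\Xvec$ and $\Xvec_G$ share their entire covariance structure, and because $\Yvec$ and $\Yvec_G$ arise from them under the identical linear operator $T$ the same holds for the outputs. A direct Gaussian-density calculation then identifies the two entropy-rate gaps with KL-divergence rates, $\derate{\Xvec_G}-\derate{\Xvec}=\limn\tfrac{1}{n}\kld{p_{X_1^n}}{p_{X_{G,1}^n}}$ and analogously for $\Yvec$, so the inequality reduces to showing that the KL-rate contracts by at least the factor $M$ under $T$, namely $\limn\tfrac{1}{n}\kld{p_{Y_1^n}}{p_{Y_{G,1}^n}}\le M\limn\tfrac{1}{n}\kld{p_{X_1^n}}{p_{X_{G,1}^n}}$. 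This is the data-processing inequality for relative entropy: for an FIR filter of length $L$, $Y_1^n$ is a measurable function of $X_{M-L+1}^{nM}$, hence $\kld{p_{Y_1^n}}{p_{Y_{G,1}^n}}\le\kld{p_{X_1^{nM+L-M}}}{p_{X_{G,1}^{nM+L-M}}}$ by DPI and stationarity, and division by $n$ together with $n\to\infty$ produces the factor $M$.

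The main obstacle is that the theorem permits causal, stable IIR filters, for which $Y_1^n$ depends on the whole semi-infinite past of $\Xvec$ and the FIR DPI step above is not immediate. I would get around this by truncating $H$ to its first $K$ impulse-response taps, applying the FIR bound to the resulting $\Yvec^{(K)},\Yvec_G^{(K)}$, and then letting $K\to\infty$; stability together with Lemma~\ref{lem:ass_filters} should guarantee that the entropy rates $\derate{\Yvec^{(K)}}$ and $\derate{\Yvec_G^{(K)}}$ converge to their untruncated counterparts, so the bound survives the limit. A subsidiary technicality is the identity $\limn\tfrac{1}{n}\diffent{Y_1^n|S_1^{nM}}=\derate{\tilde{\Nvec}}$ used in the first paragraph: the filter tails couple $Y_1^n$ to $S_\ell$ for $\ell\notin[1,nM]$, but for a stable filter that boundary contribution is $o(n)$ and does not affect the rate.
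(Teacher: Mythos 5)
Your overall strategy---rewrite both loss rates as differences of differential entropy rates, identify the Gaussian/non-Gaussian gaps with Kullback--Leibler divergence rates, and invoke the data-processing inequality for the deterministic decimation map---is a legitimate and genuinely different route from the paper's. The paper instead reduces the loss rate to a single conditional differential entropy term, $\tfrac1n\diffent{\{\tilde X_j : M\nmid j\}\,|\,\tilde X_M,\dots,\tilde X_{nM}}$, plus noise-only terms common to both models, and applies the Gaussian maximum-entropy bound for fixed second moments to that term for each finite $n$; no divergence rates, no DPI, no limiting argument over filter approximations is needed. Your DPI step is, in effect, the chain-rule restatement of that same max-entropy inequality, but the packaging matters, because it is exactly the packaging that creates your two unresolved technicalities.

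The genuine gap is the IIR case. You apply the DPI to the composite filter-and-downsample map, which for a causal stable IIR filter makes $Y_1^n$ depend on the semi-infinite past $X_{-\infty}^{nM}$; the DPI applied there only yields $\kld{p_{Y_1^n}}{p_{Y_{G,1}^n}}\le\kld{p_{X_{-\infty}^{nM}}}{p_{X_{G,-\infty}^{nM}}}$, whose right-hand side is generically infinite, so truncation is unavoidable in your setup. But the truncation limit you need, $\derate{\Yvec^{(K)}}\to\derate{\Yvec}$ for the \emph{non-Gaussian} output process as the impulse response is truncated, is asserted, not proved: Lemma~\ref{lem:ass_filters} gives finiteness of the entropy rate for each $K$, not continuity in $K$, and continuity of differential entropy rates under $\ell^1$-approximation of the filter is a nontrivial claim for general stationary inputs. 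The same root cause produces your secondary $o(n)$ issue in $\diffent{Y_1^n|S_1^{nM}}$. Both problems disappear if you first invoke Lemma~\ref{lem:relevant_filters} (as the paper does) to replace $\lossrate[\Svec^{(M)}]{\Xvec^{(M)}\to\Yvec}$ by $\lossrate[\tilde{\Svec}^{(M)}]{\tilde{\Xvec}^{(M)}\to\Yvec}$: then the only lossy map left is the bare downsampler, i.e., a coordinate projection of the finite block $\tilde X_1^{nM}$, your DPI applies exactly with no window enlargement, the conditioning identities $\diffent{\tilde X_1^{nM}|\tilde S_1^{nM}}=\diffent{\tilde N_1^{nM}}$ and $\diffent{Y_1^n|\tilde S_1^{nM}}=\diffent{\tilde N_M,\dots,\tilde N_{nM}}$ hold with equality for every $n$, and $\tilde{\Xvec}_G$ has the same covariance as $\tilde{\Xvec}$ so the divergence identity $\derate{\tilde{\Xvec}_G}-\derate{\tilde{\Xvec}}=\limn\tfrac1n\kld{p_{\tilde X_1^n}}{p_{\tilde X_{G,1}^n}}$ survives. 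With that single reordering your argument closes; as written, it does not.
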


\begin{IEEEproof}
See Appendix~\ref{proof:gaussbound}.
\end{IEEEproof}


A consequence of this theorem is that filter design by energetic considerations, i.e., by considering the PSDs of the signals only, has performance guarantees also in information-theoretic terms. In particular, while the theorem is restricted to stable and causal filters, intuition suggests that a high-order filter in some way should approximate the energy compaction filter from Corollary~\ref{cor:chen}. One has to consider, though, that the filter $H$ optimal in the sense of the upper bound might not coincide with the filter optimal w.r.t. $\lossrate[\Svec^{(M)}]{\Xvec^{(M)}\to \Yvec}$.

Note that, to the best of our knowledge, the statements of Theorem~\ref{thm:gaussbound} cannot be generalized to arbitrary correlations between $\Svec$ and $\Xvec$, as in Theorem~\ref{thm:energycompaction}. The reason is that applying Plumbley's idea requires an independent, additive Gaussian noise component. At best, a generalization to non-Gaussian noise is possible, if the noise is more Gaussian than the signal in a well-defined sense (cf.~\cite{Geiger_Relevant_arXiv}). This generalization, however, is within the scope of future work.

\section{Examples}\label{sec:example}
We now illustrate our results with an example: Let the PSD of $\Svec$ be given by $\psd{S}=1+\cos\theta$ and let $\Nvec$ be independent white Gaussian noise with variance $\sigma^2$, i.e., $\psd{N}=\sigma^2$. The PSD of $\Xvec$ is depicted in Fig.~\ref{fig:psd}. We consider downsampling by a factor of $M=2$. Were $\Svec$ Gaussian too, the optimal filter would be an ideal low-pass filter with cut-off frequency $\pi/2$ (cf.~Corollary~\ref{cor:chen}).

\begin{figure}[t]
   \centering
   \begin{pspicture}[showgrid=false](-4,-.2)(4,2.5)
	\psset{yunit=0.75cm}
	\psaxeslabels[style=Arrow](0,0)(-4,-0.5)(4,2.75){$\theta$}{$\psd{X}$}
	\psplot[style=Graph]{-4}{4}{x 180 mul Pi div cos 3 add 2 div}
	\psTick{90}(3.14,0)\uput[-90](3.14,0){$\pi$}
	\psTick{90}(-3.14,0)\uput[-90](-3.14,0){$-\pi$}
	\psTick{0}(0,1)\uput[0](0,1){$\sigma^2$}
	\end{pspicture}
	\caption{Power spectral density of $\Xvec$.}
	\label{fig:psd}
\end{figure}

If we assume that $\Svec$ is non-Gaussian, Theorem~\ref{thm:gaussbound} allows us to design a finite-order filter which minimizes an upper bound on the relevant information loss rate. In particular, it can be shown that among all first-order FIR filters with impulse response $h[n]=\delta[n]+c\delta[n-1]$, the filter with $c=1$ minimizes the Gaussian bound (see also Section~\ref{sec:FIR}).

\begin{figure}[t]
   \centering
  \begin{pspicture}[showgrid=false](-4,-0.25)(4,4)
	\psset{unit=0.6cm}
 	\small
	\psaxes[Ox=-20,Dx=10,dx=3,Dy=0.4,dy=1]{->}(-6,0)(-6.5,-0.5)(6.5,5.5)[$10\ln(\sigma^2)$,90][$\lossrate[\Svec_G^{(2)}]{\Xvec_G^{(2)}\to \Yvec_G}$,0]
	\rput[lt](-0.25,6){\psframebox%
	{\begin{tabular}{ll}
	  	\psline[linewidth=1pt,linecolor=black](0.1,0.1)(1,0.1) &\hspace*{0.25cm} $\mutrate{\Xvec_G^{(2)};\Svec_G^{(2)}}$\\%
		\psline[linewidth=1pt,linecolor=red](0.1,0.1)(1,0.1) &\hspace*{0.25cm} Ideal Low-Pass\\
		\psline[linewidth=1pt,linecolor=red,style=Dash](0.1,0.1)(1,0.1) &\hspace*{0.25cm} 1st-order Low-Pass\\%
		\psline[linewidth=1pt,linecolor=blue](0.1,0.1)(1,0.1) &\hspace*{0.25cm} No Filter
	 \end{tabular}}}
	\readdata{\Available}{relevantrate_opt1order_available.dat}
	\readdata{\no}{relevantrate_opt1order_nofilt.dat}
	\readdata{\fir}{relevantrate_opt1order_fir.dat}
	\readdata{\opt}{relevantrate_opt1order_optfilt.dat}
	\psset{xunit=1.8cm,yunit=3cm}
	\listplot[xStart=-1,plotstyle=curve,linecolor=black,linewidth=1pt]{\Available}
	\listplot[plotstyle=curve,linecolor=red,style=Dash,linewidth=0.5pt]{\fir}
	\listplot[plotstyle=curve,linecolor=red]{\opt}
	\listplot[plotstyle=curve,linecolor=blue]{\no}
 \end{pspicture}
	\caption{Upper bounds on the relevant information loss rate in nats as a function of the noise variance $\sigma^2$ for various filter options ($M=2$).}
	\label{fig:lossrate}
\end{figure}

Fig.~\ref{fig:lossrate} shows the upper bound on the relevant information loss rate as a function of the noise variance $\sigma^2$ for the ideal low-pass filter and the optimal first-order FIR filter compared to the case where no filter is used. In addition, the available information $\mutrate{\Xvec_G^{(2)};\Svec_G^{(2)}}=2\mutrate{\Xvec_G;\Svec_G}$ is plotted, which decreases with increasing noise variance. Indeed, filtering can reduce the relevant information loss rate compared to omitting the filter. This is in stark contrast with the results of Section~\ref{sec:relativeLoss}, in which we showed that the relative information loss rate equals $1/2$ regardless of the filter. The reason is that in Section~\ref{sec:relativeLoss} we did not have a signal model in mind, treating every bit of information equally. As soon as one knows which aspect of a stochastic process is relevant, one can successfully apply signal processing methods to retrieve as much information as possible (or to remove as much of the 
irrelevant information as possible, cf.~\cite{Geiger_Relevant_arXiv}).

Interestingly, as Fig.~\ref{fig:lossrate} shows, the improvement of a first-order FIR filter over direct downsampling is significant. Using low-order filters is beneficial also from a computational perspective: To the best of our knowledge, the optimization problem does not permit a closed-form solution for the filter coefficients in general. Thus, numerical procedures will benefit from the fact that the number of coefficients can be kept small. 

\begin{figure}[t]
   \centering
  \begin{pspicture}[showgrid=false](-4,-0.25)(4,6)
	\psset{unit=0.6cm}
 	\small
	\psaxes[Ox=-20,Dx=10,dx=3,Dy=0.4,dy=1]{->}(-6,0)(-6.5,-0.5)(6.5,10)[$10\ln(\sigma^2)$,90][$\lossrate[\Svec_G^{(3)}]{\Xvec_G^{(3)}\to \Yvec_G}$,0]
	\rput[lt](-0.25,8){\psframebox%
	{\begin{tabular}{ll}
	  	\psline[linewidth=1pt,linecolor=black](0.1,0.1)(1,0.1) &\hspace*{0.25cm} $\mutrate{\Xvec_G^{(3)};\Svec_G^{(3)}}$\\%
		\psline[linewidth=1pt,linecolor=red](0.1,0.1)(1,0.1) &\hspace*{0.25cm} Ideal Low-Pass\\
		\psline[linewidth=1pt,linecolor=red,style=Dash](0.1,0.1)(1,0.1) &\hspace*{0.25cm} 1st-order Low-Pass\\%
		\psline[linewidth=1pt,linecolor=red,linestyle=dotted](0.1,0.1)(1,0.1) &\hspace*{0.25cm} 2nd-order Low-Pass\\%
		\psline[linewidth=1pt,linecolor=blue](0.1,0.1)(1,0.1) &\hspace*{0.25cm} No Filter
	 \end{tabular}}}
	\readdata{\Available}{aa_M3_available.dat}
	\readdata{\no}{aa_M3_nofilt.dat}
	\readdata{\firone}{aa_M3_fir1.dat}
	\readdata{\firtwo}{aa_M3_fir2.dat}
	\readdata{\opt}{aa_M3_optfilt.dat}
	\psset{xunit=1.8cm,yunit=4.5cm}
	\listplot[xStart=-1.3,plotstyle=curve,linecolor=black,linewidth=1pt]{\Available}
	\listplot[plotstyle=curve,linecolor=red,style=Dash,linewidth=0.5pt]{\firone}
	\listplot[plotstyle=curve,linecolor=red,linestyle=dotted,linewidth=0.5pt]{\firtwo}
	\listplot[plotstyle=curve,linecolor=red]{\opt}
	\listplot[plotstyle=curve,linecolor=blue]{\no}
 \end{pspicture}
	\caption{Upper bounds on the relevant information loss rate in nats as a function of the noise variance $\sigma^2$ for various filter options ($M=3$). Note that the curve for the optimal second-order FIR filter is not visible in this figure, because it falls together with the curve of the optimal filter.}
	\label{fig:threefold:loss}
\end{figure}

\begin{figure}[t]
   \centering
  \begin{pspicture}[showgrid=false](-4,-0.25)(4,3.5)
	\psset{unit=0.6cm}
 	\small
	\psaxes[Ox=-20,Oy=0,Dx=10,dx=3,Dy=0.02,dy=1.666]{->}(-6,0)(-6.5,-0.5)(6.5,5)[$10\ln(\sigma^2)$,90][$c_1-\sqrt{2}$,180]
	\readdata{\coeff}{aa_M3_coeff2.dat}
	\psset{xunit=1.8cm,yunit=50cm}
	\listplot[plotstyle=curve,linecolor=red]{\coeff}
 \end{pspicture}
	\caption{Difference between the second-order FIR filter coefficient $c_1$ minimizing $\lossrate[\Svec_G^{(3)}]{\Xvec_G^{(3)}\to \Yvec_G}$ and the coefficient maximizing the filter output signal-to-noise ratio ($c_1=\sqrt{2}$).}
	\label{fig:threefold:coeff}
\end{figure}

\begin{figure}[t]
   \centering
  \begin{pspicture}[showgrid=false](-4,-0.25)(4,3.5)
	\psset{unit=0.6cm}
 	\small
	\psaxes[Ox=-20,Dx=10,dx=3,Dy=0.03,dy=1.666]{->}(-6,0)(-6.5,-0.5)(6.5,5)[$10\ln(\sigma^2)$,90][$10^{-3}$,180]
	\readdata{\Loss}{aa_M3_SNRLoss.dat}
	\psset{xunit=1.8cm,yunit=100cm}
	\listplot[plotstyle=curve,linecolor=red]{\Loss}
 \end{pspicture}
	\caption{Difference between the relevant information loss rates ($M=3$) of the second-order FIR filters maximizing the filter output signal-to-noise ratio and minimizing $\lossrate[\Svec_G^{(3)}]{\Xvec_G^{(3)}\to \Yvec_G}$.}
	\label{fig:threefold:addloss}
\end{figure}

We repeated the experiment with the same PSDs but with a three-fold downsampler, i.e., for $M=3$. For the Gaussian assumption, the first-order FIR filter with impulse response $h[n]=\delta[n]+\delta[n-1]$ again proved optimal. Here, however, we also determined numerically the optimal filter coefficients for a second-order FIR filter with impulse response $h[n]=\delta[n]+c_1\delta[n-1]+c_2\delta[n-2]$. Remarkably, for all considered variances, the optimal value for $c_2$ is equal to one. The optimal value for $c_1$, however, depends on the variance $\sigma^2$ of the noise process, as indicated in Fig.~\ref{fig:threefold:coeff}. The filter coefficient is close to $\sqrt{2}$, which yields the impulse response vector equal to the maximal eigenvector of the input process' autocorrelation matrix, and hence to the solution maximizing the filter output signal-to-noise ratio (see Section~\ref{sec:FIR}). While the difference diminishes for large noise variance, for strong signals the coefficient is significantly 
different. This clearly illustrates that energetic and information-theoretic designs are inherently different, and one can hope to have similar solutions to both cost functions only in few, specialized scenarios. Knowing whether such a scenario applies or not is of prime importance for the system designer, since it could admit simple energetic design approaches to circumvent the need for non-linear, non-convex optimization to achieve the information-theoretic optimum.

Comparing the relevant information loss rates depicted in Fig.~\ref{fig:lossrate} and Fig.~\ref{fig:threefold:loss}, one can observe that the loss is greater than for two-fold downsampling. For comparison, again the available information rate $\mutrate{\Xvec_G^{(3)};\Svec_G^{(3)}}=3\mutrate{\Xvec_G;\Svec_G}$ is plotted. Finally, Fig.~\ref{fig:threefold:addloss} shows the additional loss induced by replacing the ideal coefficient $c_1$ by $\sqrt{2}$, the coefficient yielding a maximum output signal-to-noise ratio. As can be seen, the additional loss is negligible, which justifies energetic design considerations from an information-theoretic point-of-view, at least in this example and for strong additive white Gaussian noise.

\section{Discussion and Outlook}\label{sec:FIR}
In our opinion, the present work has two important messages worth repeating: First of all, we showed that, with no signal model in mind, anti-aliasing filtering is futile. \emph{Assuming} that it is sensible to preserve as much energy aliasing-free as possible is guesswork and expresses the common misconception that energy and information behave similarly. In truth, the relevant information may be transported mainly in small signal components, or even in the sum of the alias terms. Hence, it might even turn out that anti-aliasing filtering, as it is proposed by standard textbooks on signal processing, does more harm than good. In this sense, the analysis of anti-aliasing filtering parallels our previous analysis of PCA in~\cite{Geiger_RILPCA_arXiv}, where the conclusion was similar.

The second important message is that with a specific signal model in mind, anti-aliasing filtering can indeed reduce the information loss in the downsampler. In particular, for a Gaussian signal-plus-noise model it does make sense to preserve the \emph{signal} components with the largest signal-to-noise ratio aliasing-free. Then, the information-theoretic optimum coincides with the energetic one, and filter design based on second-order statistics is well-justified.\\

One of the most important aims of future work is the design of finite-order filters with information-theoretic cost functions. While the information-maximizing filter with unconstrained order is simple to obtain (cf.~Theorem~\ref{thm:energycompaction}), the practically more relevant case of finite-order filters is much more difficult even in the purely Gaussian signal-plus-noise case: The problem of maximizing (cf.~Lemma~\ref{lem:relevant_filters})
\begin{multline}\label{eq:FIRline}
  \mutrate{\Svec^{(M)}_G;\Yvec_G}=\mutrate{\tilde{\Svec}^{(M)}_G;\Yvec_G}  \\= \frac{1}{4\pi}\int_{-\pi}^\pi \ln\left(1+
\frac{\sum_{k=0}^{M-1} \psdk{S}|H(\e{\jmath\theta_k})|^2}
{\sum_{k=0}^{M-1} \psdk{N}|H(\e{\jmath\theta_k})|^2} \right) d\theta
\end{multline}
does, except in particularly simple cases (see Section~\ref{sec:example}), not permit a closed-form solution, nor is it necessarily convex.

The situation simplifies when the noise is white, i.e., when $\psd{N}=\sigma_N^2$, and with the restriction that the filter satisfies the Nyquist-$M$ condition~\cite{Kirac_FIR,Vaidyanathan_OptimalFilterBank}
\begin{equation}
 \frac{1}{M}\sum_{k=0}^{M-1} \left|H(\e{\jmath\theta_k}) \right|^2 = 1.\label{eq:nyquist}
\end{equation}
This restriction is meaningful, e.g., when the filter is part of an orthonormal filter bank or a principal component filter bank.

Employing these restrictions and applying Jensen's inequality to~\eqref{eq:FIRline} yields an upper bound on the information rate
\begin{align}
 &\mutrate{\tilde{\Svec}^{(M)}_G;\Yvec_G}\notag  \\
&\le \frac{1}{2}\ln\left(1+\frac{1}{2\pi\sigma_N^2}\int_{-\pi}^\pi \frac{1}{M}\sum_{k=0}^{M-1}\psdk{S}|H(\e{\jmath\theta_k})|^2d\theta\right)\notag\\
&\stackrel{(a)}{=} \frac{1}{2}\ln\left(1+\frac{1}{2\pi\sigma_N^2}\int_{-\pi}^\pi \psd{S}|H(\e{\jmath\theta})|^2d\theta\right)\notag\\
&=\frac{1}{2}\ln\left(1+\frac{\sigma_{\tilde{S}}^2}{\sigma_N^2}\right)=\frac{1}{2}\ln\left(1+\frac{\sigma_{\tilde{S}}^2}{\sigma_{\tilde{N}}^2}\right)\label{eq:firbound}
\end{align}
where $(a)$ is because the variance of a stationary process does not change during downsampling and where $\sigma_{\tilde{S}}^2$ ($\sigma_{\tilde{N}}^2$) is the variance of $\tilde{\Svec}$ ($\tilde{\Nvec}$), the output of $H$ to the input process $\Svec$ ($\Nvec$).

Maximizing an upper bound on the information rate thus amounts to maximizing the signal-to-noise ratio, or equivalently, the signal power, at the output of the downsampler or filter. This is exactly the objective of optimum FIR compaction filters for $\psd{S}$, which have been investigated in~\cite{Kirac_FIR} and the references therein. The solution for filter orders strictly smaller than the downsampling factor $M$ is the maximal eigenvector of the autocorrelation matrix~\cite{Kirac_FIR}. For larger filter orders, various analytical and numerical methods exist; see~\cite{Tuqan_FIR} for an overview. All these represent a sub-optimal solution to the original problem of designing information-maximizing FIR filters; the problem of designing finite-order filters with, e.g., rational transfer functions, remains elusive.

Obviously, the upper bound~\eqref{eq:firbound} is the better the larger the noise variance $\sigma_N^2$ is. Hence, energetic design considerations will succeed especially in cases where the Gaussian noise is white and has a large variance; see also Fig.~\ref{fig:threefold:addloss}. One has to keep in mind, however, that even the problem of FIR filters is solved only sub-optimally, since FIR energy compaction filters only \emph{maximize an upper bound on the information rate}; the desired result, however, is either a lower bound on the information rate or an upper bound on the relevant information loss rate. Future work shall deal with this issue.

The extension of this work's results to sampling of continuous-time processes is also of great interest: The extension in terms of relevant information loss rate has been made partly in~\cite{Chen_NyquistShannon}, presenting a result similar to our Corollary~\ref{cor:chen}. The authors of~\cite{Chen_NyquistShannon} furthermore showed that a filterbank sampling mechanism can have a strictly larger capacity than a single-channel sampling mechanism, suggesting that one can further reduce the relevant information loss rate in the downsampler by replacing the filter $H$ by a filterbank. 

In terms of relative information loss rates, the extension to continuous-time processes is immediate via employing a sampling expansion (Nyquist rate) and successive downsampling, at least for bandlimited processes. If the input process is not bandlimited and has a positive PSD a.e., we conjecture that the relative information loss rate will approach unity, i.e., that 100\% of the available information is lost. 

Finally, the generalization of our Theorems~\ref{thm:aauseless} and~\ref{thm:gaussbound} to non-Gaussian processes and general filters $H$, respectively, is the goal of future work. While the former is already sketched in Appendix~\ref{proof:aauseless}, the latter requires deeper investigation.


\appendices
\section{Proof of Theorem~\ref{thm:aauseless}}\label{proof:aauseless}
\begin{figure*}[t]
\centering
 \begin{pspicture}[showgrid=false](-8,-6)(5,-1)
  \psset{style=RoundCorners,style=Arrow,gratioWh=1.2}
	\pssignal(-8,-2){X}{$\Xvec$}
	\dotnode(-6.5,-2){d1}\dotnode(-6.5,-3){d2}
	\psfblock(-5,-2){H0}{$H_1$}
	\psfblock(-5,-3){H1}{$H_2$}
	\psfblock(-5,-5){HM}{$H_{LM}$}
	\psdsampler(-3,-2){DO1}{$LM$}
	\psdsampler(-3,-3){DO2}{$LM$}
	\psdsampler(-3,-5){DOm}{$LM$}
	\psusampler(0,-2){UP1}{$LM$}
	\psusampler(0,-3){UP2}{$LM$}
	\psusampler(0,-5){UPm}{$LM$}
	\psfblock(2,-2){Hs0}{$H_1$}
	\psfblock(2,-3){Hs1}{$H_2$}
	\psfblock(2,-5){HsM}{$H_{LM}$}
	\pscircleop(3.5,-2){oplus1}
	\pscircleop(3.5,-3){oplus2}
	\pssignal(5,-2){y2}{$\Xvec$}
	\ldotsnode[angle=90](-3,-4){dots0}\ldotsnode[angle=90](0,-4){dots0}
	\ldotsnode[angle=90](2,-4){dots0}\ldotsnode[angle=90](3.5,-4){dots2}
	\ldotsnode[angle=90](-5,-4){dots0}\ldotsnode[angle=90](-6.5,-4){dots1}
	\nclist{ncline}[naput]{X,d1,H0,DO1,UP1 $\tilde{\Yvec}_1$,Hs0,oplus1,y2}
	\nclist{ncline}[naput]{d2,H1,DO2,UP2 $\tilde{\Yvec}_2$,Hs1,oplus2}
	\nclist{ncline}[naput]{HM,DOm,UPm $\tilde{\Yvec}_{LM}$,HsM}
	\nclist{ncline}[nbput]{d1,d2,dots1}
    \nclist{ncline}[nbput]{dots2,oplus2,oplus1}
	\ncangle[angleA=-90,angleB=180]{dots1}{HM} \ncangle[angleB=-90]{HsM}{dots2}
 \end{pspicture}
\caption{Filterbank decomposition of the input process $\Xvec$.}
\label{fig:polyphase}
\end{figure*}

The case $H\equiv 1$ and the case of a stable and causal $H$ have already been dealt with. Thus, assume that $H$ is piecewise constant with $H(\e{\jmath\theta})$ being either one or zero. This assumption is unproblematic, since $H$ can always be split into a filter satisfying this assumption and a set of filters satisfying the Paley-Wiener condition~\eqref{eq:Paley_Wiener}. The latter filters can be omitted as made clear above.

Next, assume that the pass-band and stop-band intervals have rational endpoints. In other words, since there are only finitely many such intervals, there exists an even integer $L$ large enough such that the pass-band interval endpoints are integer multiples of $1/L$. With this in mind, observe Fig.~\ref{fig:polyphase} which illustrates the filterbank decomposition of $\Xvec$~\cite[Ch.~4.7.6,~p.~230]{Oppenheim_Discrete3}. There, $H_i$ is an ideal brick-wall filter for the $i$-th frequency band, i.e., 
\begin{equation}
  H_i(\e{\jmath\theta}) = 
\begin{cases}
	1, &\text{ if } \frac{(i-1)\pi}{LM} < |\theta-2k\pi|\le \frac{i\pi}{LM},\quad k\in\mathbb{Z}\\
    0, &\text{ else}
\end{cases}.\label{eq:brickwallfilter}
\end{equation}
Since $\Xvec$ is a Gaussian process~\cite[p.~663]{Papoulis_Probability},
\begin{equation}
 \derate{\Xvec} = \frac{1}{2}\ln(2\pi\e{}) + \frac{1}{4\pi}\int_{-\pi}^\pi \ln \psd{X} d\theta, \label{eq:gaussbound}
\end{equation}
and from $|\derate{\Xvec}|<\infty$ it follows that $\psd{X}>0$ a.e. It naturally follows that, for all $i=1,\dots,LM$, $\psd{\tilde{Y}_i}>0$ a.e., where of course $\tilde{\Yvec}_i$ is Gaussian too.

Clearly, the variance of the $i$-th downsampled process $\tilde{\Yvec}_i$ is positive (since its PSD is positive a.e.) and finite (since it is upper bounded by $LM$ times the variance of $\Xvec$). Thus, $|\diffent{\tilde{Y}_i}|<\infty$, and $\derate{\tilde{\Yvec}_i}<\infty$. The differential entropy rates of $\tilde{\Yvec}_i$ are obtained by splitting the integral in~\eqref{eq:gaussbound} into $LM$ parts; the sum of these $LM$ parts is $-\infty$ if at least one of its parts is $-\infty$ (since none of these parts can be $\infty$ by the fact that $|\diffent{\tilde{Y}_i}|<\infty$). Thus, $|\derate{\tilde{\Yvec}_i}|<\infty$, and with Lemma~\ref{lem:DEInfoDim}, it follows that
\begin{equation}
 \infodim{(\tilde{Y}_i)_1^n}=n
\end{equation}
for all $i$. Moreover, since the downsampled processes $\tilde{\Yvec}_i$ are mutually independent (they are uncorrelated and jointly Gaussian), it follows that
\begin{equation}
 \infodim{(\tilde{Y}_1)_1^n,\dots,(\tilde{Y}_{LM})_1^n}=nLM.
\end{equation}
Since the collection $\tilde{\Yvec}:=\{\tilde{\Yvec}_1,\dots,\tilde{\Yvec}_{LM}\}$ is equivalent to $\Xvec$, in the sense that perfect reconstruction is possible, by Assumption~\ref{ass:invertible},
\begin{equation}
 \relLossrate{\XvecMb\to \Yvec} = \relLossrate{\tilde{\Yvec}\to \blocked{\Yvec}{L}}.
\end{equation}

\begin{figure*}[t]
\centering
 \begin{pspicture}[showgrid=false](-8,-8)(5,1)
  \psset{style=RoundCorners,style=Arrow,gratioWh=1.2}
	\pssignal(-8,0){Yi}{$\tilde{\Yvec}_i$}
	\psusampler(-6,0){UP}{$LM$}
	\psfblock(-4,0){H}{$H_i$}
	\psdsampler(-2,0){Do}{$M$}
	\pssignal(0,0){Z}{$\Zvec_i$}
	\nclist{ncline}{Yi,UP,H,Do,Z}

	\pssignal(-8,-2){Y1}{$\tilde{\Yvec}_i$}
	\dotnode(-2,-2){d1}\dotnode(-2,-3){d2}
	\psusampler(-6,-2){UP1}{$L$}\psusampler(-4,-2){UP2}{$M$}
	\psdsampler(0,-2){D1}{$M$}
	\psdsampler(0,-3){D2}{$M$}
	\psdsampler(0,-5){Dm}{$M$}
	\psfblock(2,-2){Hs0}{$H_i^0$}
	\psfblock(2,-3){Hs1}{$H_i^1$}
	\psfblock(2,-5){HsM}{$H_i^{M-1}$}
	\pscircleop(3.5,-2){oplus1}
	\pscircleop(3.5,-3){oplus2}
	\pssignal(5,-2){Z1}{$\Zvec_i$}
	\ldotsnode[angle=90](-2,-4){dots1}\ldotsnode[angle=90](0,-4){dots0}
	\ldotsnode[angle=90](2,-4){dots0}\ldotsnode[angle=90](3.5,-4){dots2}
	\nclist{ncline}[naput]{Y1,UP1,UP2,d1,D1,Hs0,oplus1,Z1}
	\nclist{ncline}[naput]{d2,D2,Hs1,oplus2}
	\nclist{ncline}[naput]{Dm,HsM}
	\nclist{ncline}[nbput]{d1,d2 $z^{-1}$,dots1 $z^{-1}$}
    \nclist{ncline}[nbput]{dots2,oplus2,oplus1}
	\ncangle[angleA=-90,angleB=180]{dots1}{Dm} \ncangle[angleB=-90]{HsM}{dots2}

	\pssignal(-8,-7){Yi3}{$\tilde{\Yvec}_i$}
	\psusampler(-6,-7){UP3}{$L$}
	\psfblock(-4,-7){H3}{$H_i^0$}
	\pssignal(-2,-7){Z3}{$\Zvec_i$}
	\nclist{ncline}{Yi3,UP3,H3,Z3}
 \end{pspicture}
\caption{All three systems are equivalent. The first equivalence is due to the polyphase decomposition of the decimation system $H_i$ followed by the $M$-fold downsampler. The second equivalence is due to the fact that the filters in all but the first branch have an input signal identical to zero. $H_i^0$ is the $M$-fold downsampled filter $H_i$, i.e., it has impulse response $h_i[nM]$. By linearity, $\Xvec$ is the sum of the processes $\Zvec_i$, $i=1,\dots,LM$.}
\label{fig:rearranging}
\end{figure*}

We employ the linearity of the system to move the filter $H$ next to the reconstruction filters $H_i$. By the assumption made about the pass-bands of $H$, the cascade of $H$ and $H_i$ either equals $H_i$ or is identical to zero. The filter $H$ thus amounts to eliminating some of the sub-band processes $\tilde{\Yvec}_i$; a simple projection. What remains to be analyzed is the effect of the $M$-fold downsampler, which can also be moved next to the reconstruction filters due to linearity. Notice that with the polyphase decomposition of decimation systems (cf.~\cite[Ch.~4.7.4,~p.~228]{Oppenheim_Discrete3}), the $i$-th branch of the filterbank can be rearranged as in Fig.~\ref{fig:rearranging}. Due to the cascade of up- and downsampling, only the filter $H_i^0$ is relevant, while all other filters $H_i^l$ will have vanishing input. In particular, while $H_i$ is given by~\eqref{eq:brickwallfilter}, one gets for the filter $H_i^0$ with impulse response $h_i[nM]$,
\begin{align}
   H_i^0(\e{\jmath\theta}) &=
\frac{1}{M}\sum_{m=0}^{M-1}  H_i(\e{\jmath\frac{\theta-2m\pi}{M}})\\
&=
\begin{cases}
	\frac{1}{M}, &\text{ if } \frac{(i-1)\pi}{L} < |\theta-2k\pi|\le \frac{i\pi}{L},\quad k\in\mathbb{Z}\\
    0, &\text{ else}
\end{cases} \label{eq:filtersequal}
\end{align}
where the last line follows from the fact that $H_i$ is $LM$-aliasing-free and, hence, $M$-aliasing free ($H_i$ have bandwidths $1/LM$ and fall in exactly one of the bands with width $1/M$).

By the $2\pi$-periodicity of the transfer functions it follows that the sequence of filters is periodic with $2L$, i.e., $H_i^0 = H_{i+2L}^0$. Moreover, $H_{L+k}^0 = H_{L-k+1}^0$, $k=1,\dots,L$, by the symmetry of the filter. Therefore, there are exactly $L$ different filters, each occurring $M$ times.

\begin{figure*}[t]
\centering
 \begin{pspicture}[showgrid=false](-9,-8.25)(4,-0.5)
  \psset{style=RoundCorners,style=Arrow,gratioWh=1.2}
	\pssignal(-8,-0.75){Y1}{$\tilde{\Yvec}_1$}
	\pssignal(-8,-1.5){YL1}{$\tilde{\Yvec}_{2L+1}$}
	\ldotsnode[angle=90](-8,-2.25){dots0}\ldotsnode[angle=90](-4,-2.25){dots1}
	\pssignal(-8,-3.25){YL1M}{$\tilde{\Yvec}_{(M-1)L+1}$}
	\pscircleop(-4,-1.5){oplus1}
	\ncline{YL1}{oplus1}  \rput(-5.5,-1.25){$c_{2L+1}$}
	\ncangle[angleB=90]{Y1}{oplus1} \rput(-5.5,-0.5){$c_1$}
	\ncline{dots1}{oplus1} 
	\ncangle[angleB=-90]{YL1M}{dots1} \rput(-5.5,-3){$c_{(M-1)L+1}$}
	
	\ldotsnode[angle=90](-8,-4.25){dots0}

	\pssignal(-8,-5.25){YL}{$\tilde{\Yvec}_L$}
	\pssignal(-8,-6){YLM}{$\tilde{\Yvec}_{3L}$}
	\pssignal(-8,-7.75){YLMM}{$\tilde{\Yvec}_{ML}$}
	\ldotsnode[angle=90](-8,-6.75){dots0}\ldotsnode[angle=90](-4,-6.75){dots2}
	\pscircleop(-4,-6){oplus2}
	\ncline{YLM}{oplus2}  \rput(-5.5,-5.75){$c_{3L}$}
	\ncangle[angleB=90]{YL}{oplus2} \rput(-5.5,-5){$c_L$}
	\ncline{dots2}{oplus2} 
	\ncangle[angleB=-90]{YLMM}{dots2} \rput(-5.5,-7.5){$c_{ML}$}

	\psusampler(-2,-1.5){UP1}{$L$}
	\psfblock(0,-1.5){Hs1}{$H_1^0$}
	\psusampler(-2,-6){UP2}{$L$}
	\psfblock(0,-6){Hs2}{$H_L^0$}
	\ldotsnode[angle=90](-2,-4.25){dots0}
	\ldotsnode[angle=90](0,-4.25){dots0}
	
	\pscircleop(2,-1.5){oplusend}
	\ldotsnode[angle=90](2,-4.25){dotsend}
	\pssignal(4,-1.5){Y}{$\Yvec$}

	\nclist{ncline}[naput]{oplus1,UP1 $\hat{\Yvec}_1$,Hs1,oplusend,Y}
	\nclist{ncline}[naput]{oplus2,UP2 $\hat{\Yvec}_L$,Hs2}
	\ncangle[angleB=-90]{Hs2}{dotsend}
	\ncline{dotsend}{oplusend}
 \end{pspicture}
\caption{Equivalent system for a decimation filter with a piecewise constant $H$ (pass-band intervals with rational endpoints). The constants $c_i$ indicate whether or not the sub-band process is eliminated by $H$, i.e., $c_i\in\{0,1\}$. Note that with~\eqref{eq:filtersequal} the interpolator outputs can be added without information loss. Thus, information loss only occurs by eliminating and/or adding sub-band processes -- a cascade of an invertible linear map and a projection. The system is shown for $M$ odd. Note that $\hat{Y}_i$ depends on $\{\tilde{Y}_i,\tilde{Y}_{2L-i+1},\tilde{Y}_{2L+i},\tilde{Y}_{4L-i+1},\dots\}$.}
\label{fig:final}
\end{figure*}

Combining the last system from Fig.~\ref{fig:rearranging} with Fig.~\ref{fig:polyphase} and~\eqref{eq:filtersequal}, the schematic in Fig.~\ref{fig:final} is obtained. Note that since the filters $H_i^0$ are orthogonal and $L$-aliasing-free (i.e., the frequency response of the filter does not overlap by $L$-fold downsampling and can thus be reconstructed perfectly), adding the reconstruction filter outputs does not incur information loss. We thus again use Assumption~\ref{ass:invertible} and write
\begin{equation}
 \relLossrate{\XvecMb\to\Yvec} = \relLossrate{\tilde{\Yvec}\to\blocked{\Yvec}{L}} = \relLossrate{\tilde{\Yvec}\to\hat{\Yvec}}
\end{equation}
where $\hat{\Yvec}:=\{\hat{\Yvec}_1,\dots,\hat{\Yvec}_L\}$. But the transformation from $\tilde{\Yvec}$ to $\hat{\Yvec}$ is linear, specifically, the cascade of an invertible linear map and a projection. We therefore apply~\cite[Cor.~1]{Geiger_RILPCA_arXiv} and get
\begin{multline}
 \relLoss{(\tilde{Y}_1)_1^n,\dots,(\tilde{Y}_{LM})_1^n \to (\hat{Y}_1)_1^n,\dots,(\hat{Y}_L)_1^n} \\
= 1- \frac{\infodim{(\hat{Y}_1)_1^n,\dots,(\hat{Y}_L)_1^n}}{\infodim{(\tilde{Y}_1)_1^n,\dots,(\tilde{Y}_{LM})_1^n}} = 1- \frac{\infodim{(\hat{Y}_1)_1^n,\dots,(\hat{Y}_L)_1^n}}{nML}.
\end{multline}
The information dimension of $\{(\hat{Y}_1)_1^n,\dots,(\hat{Y}_L)_1^n\}$ is bounded from above by the number of its scalar components, which is $nL$. This completes the proof for filters $H$ with rational endpoints of the pass-band intervals.

Assume now that one of the interval endpoints is an irrational $a_i$. Then, for a fixed $L$, there exists $A_i\in\mathbb{Z}$ such that $A_i/L<a_i<(A_i+1)/L$. Obviously, the filter with the irrational endpoint replaced by either of these two rational endpoints destroys either more or less information (either the corresponding coefficient $c_m$ in Fig.~\ref{fig:final} is zero or one). For both of these filters, however, the information dimension of $\hat{Y}_1^n$ cannot exceed $nL$, and above analysis holds. This completes the proof.
\endproof 

Note that the proof suggests how to measure the exact relative information loss rate for the decimation system by evaluating the information dimension of $\hat{Y}_1^n$. For rational endpoints of the pass-band intervals this is simple since $\infodim{(\hat{Y}_i)_1^n}\in\{0,n\}$. For irrational endpoints one can always wedge the filter $H$ between one with destroys more and one which destroys less information; for $L$ sufficiently large, the resulting difference in the relative information loss rates will be small, and eventually vanish in the limit $L\to\infty$.

Moreover, the result should also hold for non-Gaussian processes satisfying Assumption~\ref{ass:processes}. The intuition behind this is a bottleneck consideration: Since the filterbank decomposition is perfectly invertible, the information dimensions of the input and output processes need to be identical for all time windows $\{1,\dots,n\}$. The Gaussian assumption was required to show that the information dimension (for a given time window) of each sub-band process is related to the information dimension of the input process (in the same time window) and the number of filterbank channels. We believe that the Gaussian assumption can be removed by the fact that all operations in the model are Lipschitz, and that therefore the information dimension cannot increase, cf.~\cite{Wu_Renyi}. As a consequence, it is not possible that the information dimension of the sub-band processes is smaller than in the Gaussian case, since then the information dimension of the (reconstructed) output would be smaller than the information dimension of the input -- a contradiction.

\section{Proof of Theorem~\ref{thm:energycompaction}}\label{proof:energycompaction}
The goal is to maximize the information rate between $\Yvec$ and the $M$-dimensional input process $\Mblocked{\Svec}$, i.e., $\mutrate{\Mblocked{\Svec};\Yvec}$, because it is the only component of $\lossrate[\Mblocked{\Svec}]{\Mblocked{\Xvec}\to\Yvec}$ depending on $H$. 

The $l$-th coordinate of $\Mblocked{\Svec}$ shall be the process $\Svec_l$ with samples $S_{l},S_{M+l},S_{2M+l},\cdots$, where $l=1,\dots,M$. We note in passing that the processes $\Svec_l$ constitute the polyphase decomposition of $\Svec$. The PSD of the $l$-th coordinate is given as\footnote{It is immaterial whether the sum runs from 0 to $M-1$ or from 1 to $M$. We will make repeated use of this fact below.}
\begin{equation}
 \psd{ll}:=\psd{S_l}=\frac{1}{M}\sum_{k=0}^{M-1} \psdk{S};
\end{equation}
the cross PSD between the $l$-th and the $m$-th coordinate is
\begin{equation}
 \psd{lm}:=\psd{S_lS_m}=\frac{1}{M}\sum_{k=0}^{M-1} \psdk{S}\e{\jmath(l-m)\theta_k}.
\end{equation}
Note that $\psd{lm}=\kpsd{ml}$.

With $\psd{\tilde{X}}=\psd{X}|H(\e{\jmath\theta})|^2$ and $Y_n:=\tilde{X}_{nM}$ the PSD of $\Yvec$ is given as
\begin{equation}
 \psd{Y} = \frac{1}{M}\sum_{k=0}^{M-1} \psdk{X}|H(\e{\jmath\theta_k})|^2.\label{eq:psdy}
\end{equation}

Finally, if $\psd{SX}$ is the cross PSD between $\Svec$ and $\Xvec$, one has $\psd{S\tilde{X}}=\psd{SX}H^*(\e{\jmath\theta})$~\cite[Ch.~9-4]{Papoulis_Probability} and for the cross-PSD between $\Svec_l$ and $\Yvec$,
\begin{equation}
 \psd{lY}:=\psd{S_lY} = \frac{1}{M}\sum_{k=0}^{M-1} \psdk{SX} H^*(\e{\jmath\theta_k}) \e{\jmath l\theta_k}.
\end{equation}
Again, $\psd{lY}=\kpsd{Yl}$.

Let $\Amat_S$ be the $M\times M$ PSD matrix containing the elements $\psd{lm}$, let $\svec_Y$ be a column vector with elements $\psd{lY}$, and let
\begin{equation}
 \Amat_{SY}=\left[
\begin{array}{cc}
  \Amat_S & \svec_Y\\ \svec_Y^H & \psd{Y}
\end{array}
\right]
\end{equation}
where $^H$ is the Hermitian transposition. Then, if
\begin{equation}
 \int_{-\pi}^\pi \ln |\det \Amat_S|d\theta > -\infty \label{eq:nregular}
\end{equation}
the information rate $\mutrate{\Mblocked{\Svec};\Yvec}$ equals~\cite[Thm.~10.4.1]{Pinsker_InfoEngl}
\begin{equation}
 \mutrate{\Mblocked{\Svec};\Yvec}=\frac{1}{4\pi}\int_{-\pi}^\pi \ln\frac{ \psd{Y} \det\Amat_S}{\det{\Amat_{SY}}} d\theta.\label{eq:proof:rate}
\end{equation}

To verify that condition~\eqref{eq:nregular} holds, note that $\Amat_S=\Wvec\boldsymbol{\Lambda}\Wvec^H$, where the $(l,m)$-th element of $\Wvec$ is $\e{\jmath l \theta_m}/\sqrt{M}$ and where $\boldsymbol{\Lambda}$ is a diagonal matrix with $\psdk[l]{S}$ in its $l$-th position. Hence, $\Wvec$ is unitary, $\det\Amat_S=\det\boldsymbol{\Lambda}=\prod_{l=0}^{M-1}\psdk[l]{S}$, and, doing some calculus,
\begin{equation}
 \frac{1}{4\pi}\int_{-\pi}^\pi \ln |\det \Amat_S|d\theta = M\derate{\Svec}.
\end{equation}

We now consider the fraction $\det\Amat_{SY}/\det\Amat_S=\det\Amat_S^{-1}\det\Amat_{SY}$. According to Cauchy's expansion~\cite[p.~26]{Horn_Matrix},
\begin{equation}
 \det\Amat_{SY} = \psd{Y}\det{\Amat_S}-\svec_Y^H\mathrm{adj}\Amat_S \svec_Y.
\end{equation}
Since $\Amat_S$ is non-singular a.e. by Assumption~\ref{ass:processes}, we can write for the adjugate $\mathrm{adj}\Amat_S=\Amat_S^{-1}\det \Amat_S$. Hence,
\begin{equation}
 \frac{\det\Amat_{SY}}{\det{\Amat_S}} = \psd{Y}-\svec_Y^H \Amat_S^{-1} \svec_Y.
\end{equation}
With $\Amat_S^{-1}=\Wvec\boldsymbol{\Lambda}^{-1}\Wvec^H$, we can write for
\begin{align}
 &(\Amat_S^{-1} \svec_Y)_l\notag \\
&= \sum_{n=1}^{M} (\Amat_S^{-1})_{ln} \psd{nY}\\
&= \sum_{n,k,m=1}^{M} \frac{\psdk[m]{SX} H^*(\e{\jmath\theta_m})}{M^2\psdk{S}}\e{\jmath(l-n)\theta_k} \e{\jmath n\theta_m}.
\end{align}
But
\begin{equation}
 \sum_{n=0}^{M-1} \e{\jmath n (\theta_m-\theta_k)}= \sum_{n=0}^{M-1} \e{\jmath n (m-k)\frac{2\pi}{M}}
\end{equation}
vanishes if $m\neq k$ and evaluates to $M$ otherwise. Thus, 
\begin{equation}
 (\Amat_S^{-1} \svec_Y)_l
= \sum_{k=1}^{M} \frac{\psdk{SX} H^*(\e{\jmath\theta_k})}{M\psdk{S}}\e{\jmath l \theta_k}.
\end{equation}
Finally,
\begin{align}
 &\svec_Y^H \Amat_S^{-1} \svec_Y\notag \\
&= \sum_{l=1}^{M} \kpsd{lY} (\Amat_S^{-1} \svec_Y)_l \\
&= \sum_{k,l,m=1}^{M} \frac{\psdk{SX} H^*(\e{\jmath\theta_k})}{M^2\psdk{S}}\kpsdk[m]{SX} H(\e{\jmath\theta_m}) \e{\jmath l(\theta_k-\theta_m)} .
\end{align}
Summing $\e{\jmath l (\theta_k-\theta_m)}$ over index $l$ again vanishes for $k\neq m$ and evaluates to $M$ otherwise. Hence,
\begin{equation}
 \svec_Y^H \Amat_S^{-1} \svec_Y= \frac{1}{M} \sum_{k=0}^{M-1} \frac{|\psdk{SX}|^2 |H(\e{\jmath\theta_k})|^2}{\psdk{S}}.
\end{equation}
Maximizing the information rate amounts to pointwise maximizing the argument of the logarithm of~\eqref{eq:proof:rate}, i.e.,
\begin{align}
 & \frac{\sum_{k=0}^{M-1} \psdk{X}|H(\e{\jmath\theta_k})|^2}{\sum_{k=0}^{M-1} \psdk{X}|H(\e{\jmath\theta_k})|^2 - \frac{|\psdk{SX}|^2 |H(\e{\jmath\theta_k})|^2}{\psdk{S}}}\notag\\
&= 1 + \frac{\sum_{k=0}^{M-1}\frac{|\psdk{SX}|^2 }{\alpha(\theta_k)\psdk{S}}\alpha(\theta_k)|H(\e{\jmath\theta_k})|^2}{\sum_{k=0}^{M-1} \alpha(\theta_k) |H(\e{\jmath\theta_k})|^2}\label{eq:proof:second}
\end{align}
where we inserted~\eqref{eq:psdy} for $\psd{Y}$ and where
\begin{equation}
 \alpha(\theta_k):=\frac{\psdk{S}\psdk{X}-|\psdk{SX}|^2}{\psdk{S}}.
\end{equation}
The second term in~\eqref{eq:proof:second} is a weighted average with weights $w_k(\theta):=\alpha(\theta_k)|H(\e{\jmath\theta_k})|^2/\sum_{k'=0}^{M-1} \alpha(\theta_{k'}) |H(\e{\jmath\theta_{k'}})|^2$:
\begin{equation}
 1 + \sum_{k=0}^{M-1}\frac{|\psdk{SX}|^2 }{\psdk{S}\psdk{X}-|\psdk{SX}|^2}w_k(\theta)
\end{equation}
The maximum is achieved by setting $w_k(\theta)=1$ for the first index $k$ satisfying, for all $l=0,\dots,M-1$,
\begin{multline}
 \frac{|\psdk{SX}|^2 }{\psdk{S}\psdk{X}-|\psdk{SX}|^2} \\\ge \frac{|\psdk[l]{SX}|^2 }{\psdk[l]{S}\psdk[l]{X}-|\psdk[l]{SX}|^2}.
\end{multline}
Evidently, all other weights have to be set to zero.

The filter $H$ is thus related to the piecewise constant functions $w_k(\theta)$ via
\begin{equation}
 |H(\e{\jmath\theta_k})|^2 = \frac{\psdk{S}}{\psdk{S}\psdk{X}-|\psdk{SX}|^2}w_k(\theta)
\end{equation}
where the relation has to be fulfilled for all $k=0,\dots,M-1$. By assumption, the denominator corresponds to the squared magnitude response of a causal, stable filter, and since Lemma~\ref{lem:relevant_filters} holds, one can choose $H$ to be piecewise constant. That $H$ is identical to the optimal energy compaction filter for $\frac{|\psd{SX}|^2 }{\psd{S}\psd{X}-|\psd{SX}|^2}$ is evident from Definition~\ref{def:energycompaction}.\endproof

\section{Proof of Theorem~\ref{thm:gaussbound}}\label{proof:gaussbound}
Note that with Lemma~\ref{lem:relevant_filters}
\begin{equation}
 \lossrate[\Mblocked{\Svec}]{\Mblocked{\Xvec}\to\Yvec} = \lossrate[\tilde{\Svec}^{(M)}]{\tilde{\Xvec}^{(M)}\to\Yvec}
\end{equation}
where $\tilde{\Xvec}$ ($\tilde{\Svec}$) is obtained by filtering $\Xvec$ ($\Svec$) with $H$. Since $\tilde{X}_n=\tilde{S}_n+\tilde{N}_n$, and since $Y_n=\tilde{X}_{nM}$, one obtains
\begin{align}
&\lossrate[\tilde{\Svec}^{(M)}]{\tilde{\Xvec}^{(M)}\to\Yvec} \notag\\
&= \limn \frac{1}{n} \left(\diffent{\tilde{X}_1^{nM}}-\diffent{\tilde{X}_1^{nM}|\tilde{S}_1^{nM}}\right.\notag\\
&\quad \left.-\diffent{Y_1^n}+\diffent{Y_1^n|\tilde{S}_1^{nM}}\right)\\
&= \limn \frac{1}{n} \left(\diffent{\tilde{X}_1^{nM}}-\diffent{\tilde{X}_M,\dots,\tilde{X}_{nM}}\right.\notag\\
&\quad\left.-\diffent{\tilde{N}_1^{nM}}+\diffent{\tilde{N}_M,\dots,\tilde{N}_{nM}}\right)\\
&= \limn \frac{1}{n} \diffent{\tilde X_1^{M-1},\dots,\tilde X_{(n-1)M+1}^{nM-1}|\tilde X_M,\dots,\tilde X_{nM}}\notag\\
&\quad - \limn \frac{1}{n} \diffent{\tilde{N}_1^{nM}}+\limn \frac{1}{n} \diffent{\tilde{N}_M,\dots,\tilde{N}_{nM}}.
\end{align}
The conditional differential entropy in the last equation is always upper bounded by the corresponding expression for RVs $\tilde{X}_{G,1},\dots,\tilde{X}_{G,mM}$ with the same joint first- and second-order moments as the original RVs~\cite[Thm.~8.6.5, p.~254]{Cover_Information2}. Replacing $\Svec$ by $\Svec_G$ yields $\Xvec_G$ and $\Yvec_G$ Gaussian (by Gaussianity of $\Nvec$) and achieves this upper bound with equality. Taking the limit completes the proof.\endproof

\section{Auxiliary Results}\label{app:auxiliary}
\subsection{Proof of Lemma~\ref{lem:DEInfoDim}}\label{proof:DEInfoDim}
By assumption, $|\diffent{Z}|<\infty$ and $|\derate{\Zvec}|<\infty$. But since $\derate{\Zvec}=\limn \diffent{Z_n|Z_1^{n-1}}$  it also follows from conditioning~\cite[Cor. to Thm.~8.6.1, p.~253]{Cover_Information2} and the chain rule of differential entropy~\cite[Thm.~8.6.2, p.~253]{Cover_Information2} that
\begin{equation}
 -\infty < \card{\LetterSet{J}} \derate{\Zvec} \le \diffent{Z_\LetterSet{J}}\le \card{\LetterSet{J}}\diffent{Z} <\infty.
\end{equation}
By the assumption of finite variance, $\ent{\lfloor Z\rfloor}<\infty$ and the information dimension of $Z$ exists~\cite[Prop.~1]{Wu_Renyi} (and similarly for any finite collection $Z_\LetterSet{J}$ of samples). But $\diffent{Z_\LetterSet{J}}$ is the $\card{\LetterSet{J}}$-dimensional entropy of $Z_\LetterSet{J}$, which can only be finite\footnote{If $\infodim{Z}<d$, the $d$-dimensional entropy of $Z$ would be $-\infty$; if $\infodim{Z}>d$, the $d$-dimensional entropy of $Z$ would be $\infty$.} if $\infodim{Z_\LetterSet{J}}=\card{\LetterSet{J}}$~\cite{Renyi_InfoDim}. This completes the proof.\endproof

\subsection{Proof of Lemma~\ref{lem:ass_filters}}\label{proof:ass_filters}
If the filter is stable (i.e., the impulse response is absolutely summable~\cite[Ch.~2.4, p.~59]{Oppenheim_Discrete3} and, thus, square summable) and causal, the Paley-Wiener theorem~\cite[p.~215]{Papoulis_Fourier} states that
\begin{equation}
 \frac{1}{2\pi} \int_{-\pi}^\pi \ln |H(\e{\jmath\theta})| d\theta > -\infty.
\end{equation}
Moreover, since the filter is stable, one has by Jensen's inequality:
\begin{align}
 \frac{1}{4\pi} \int_{-\pi}^\pi \ln |H(\e{\jmath\theta})|^2 d\theta
&= \frac{1}{2}\expec{\ln |H(\e{\jmath\theta})|^2}\\
&\le \frac{1}{2} \ln \left(\expec{|H(\e{\jmath\theta})|^2}\right)\\
&= \frac{1}{2} \ln \NoiseGain<\infty
\end{align}
where the expectation is taken assuming the frequency variable $\theta$ is uniformly distributed on $[-\pi, \pi]$ and where the noise gain $G$ is
\begin{equation}
 \NoiseGain := \frac{1}{2\pi} \int_{-\pi}^\pi |H(\e{\jmath\theta})|^2 d\theta.
\end{equation}
The last (strict) inequality follows by assuming stability (square summability of the impulse response and Parseval's theorem~\cite[Tab.~2.2, p.~86]{Oppenheim_Discrete3}).

According to~\cite[p.~663]{Papoulis_Probability}, the differential entropy rate at the output of the filter $H$ is given by
\begin{equation}
 \derate{\tilde{\Zvec}} = \derate{\Zvec} + \frac{1}{2\pi} \int_{-\pi}^\pi \ln |H(\e{\jmath\theta})| d\theta
\end{equation}
and thus, by assumption, finite.

With~\cite[(9-190), p.~421]{Papoulis_Probability}, one has for the autocorrelation function of the output of a linear filter
\begin{equation}
 r_{\tilde{Z}\tilde{Z}}[m] = \left(r_{ZZ}*\rho\right) [m]
\end{equation}
where
\begin{equation}
 \rho[m] = \sum_{k} h[m+k]h^*[k].
\end{equation}
Thus, by the fact that for a zero-mean process the variance satisfies $\sigma^2=r_{ZZ}[0]\ge |r_{ZZ}[m]|$,
\begin{align}
 r_{\tilde{Z}\tilde{Z}}[0] &\le \sigma^2 \sum_m |\rho[m]|\\
&\le \sigma^2 \sum_m\sum_{k} |h[m+k]h^*[k]|\\
&= \sigma^2 \sum_{k}\left( |h^*[k]|\sum_{m} |h[m+k]|\right)\\
&\le \sigma^2 \sum_{k}\left( |h^*[k]| C \right)\\
&\le \sigma^2 C^2 < \infty
\end{align}
where the last two lines follow from stability of $H$ (the impulse response is absolutely summable) and by the assumption that $\Zvec$ has finite variance. Thus, by conditioning and the maximum-entropy property of the Gaussian distribution,
\[
 -\infty<\derate{\tilde{\Zvec}}\le \diffent{\tilde{Z}} \le \frac{1}{2}\ln(2\pi\e{}\sigma^2 C^2) <\infty.
\]
This completes the proof.\endproof 

\subsection{Proof of Lemma~\ref{lem:relevant_filters}}\label{proof:relevant_filters}
The proof is provided for jointly Gaussian processes $\Wvec$ and $\Zvec$ only; since the effect of linear filters on the differential entropy rate is independent of the process statistics (cf.~\cite[p.~663]{Papoulis_Probability}), the result can be extended to the general case.

First, note that a stable, causal filter satisfies the Paley-Wiener condition, and that thus $H(\e{\jmath\theta})>0$ a.e. From~\cite[Cor. to Thm.~9-4, p.~412]{Papoulis_Probability} one gets $\psd{\tilde{Z}}=|H(\e{\jmath\theta})|^2 \psd{Z}$. Since $\Zvec$ has a finite entropy rate, $\psd{Z}>0$ a.e., and, thus, $\psd{\tilde{Z}}>0$ a.e. That for the cross PSD $\psd{\tilde{Z}W}=H(\e{\jmath\theta})\psd{ZW}$ holds can be shown easily.

From~\cite[Thm.~10.2.1,~p.~175]{Pinsker_InfoEngl},
\begin{equation}
 \mutrate{\Zvec;\Wvec} = -\frac{1}{4\pi}\int_{-\pi}^\pi \ln\left(1-|\rho_{ZW}(\e{\jmath\theta})|^2\right)
\end{equation}
where
\begin{equation}
 |\rho_{ZW}(\e{\jmath\theta})|^2 
=\begin{cases}
	\frac{|\psd{ZW}|^2}{\psd{Z}\psd{W}}, &\text{ if } \psd{ZW}\neq0\\
	0, &\text{ else}
\end{cases}.
\end{equation}
With above reasoning one gets
\begin{multline}
 |\rho_{\tilde{Z}W}(\e{\jmath\theta})|^2 \\
=\begin{cases}
	\frac{|H(\e{\jmath\theta})|^2|\psd{ZW}|^2}{|H(\e{\jmath\theta})|^2\psd{Z}\psd{W}}, &\text{ if } H(\e{\jmath\theta})\psd{ZW}\neq0\\
	0, &\text{ else}
\end{cases}
\end{multline}
which is a.e. equal to $|\rho_{ZW}(\e{\jmath\theta})|^2$ since $H(\e{\jmath\theta})>0$ a.e. This completes the proof.\endproof

\bibliographystyle{IEEEtran}
\bibliography{IEEEabrv,/afs/spsc.tugraz.at/project/IT4SP/1_d/Papers/InformationProcessing.bib,%
/afs/spsc.tugraz.at/project/IT4SP/1_d/Papers/ProbabilityPapers.bib,%
/afs/spsc.tugraz.at/user/bgeiger/includes/textbooks.bib,%
/afs/spsc.tugraz.at/user/bgeiger/includes/myOwn.bib,%
/afs/spsc.tugraz.at/user/bgeiger/includes/UWB.bib,%
/afs/spsc.tugraz.at/project/IT4SP/1_d/Papers/InformationWaves.bib,%
/afs/spsc.tugraz.at/project/IT4SP/1_d/Papers/ITBasics.bib,%
/afs/spsc.tugraz.at/project/IT4SP/1_d/Papers/HMMRate.bib,%
/afs/spsc.tugraz.at/project/IT4SP/1_d/Papers/SignalProcessing.bib,%
/afs/spsc.tugraz.at/project/IT4SP/1_d/Papers/ITAlgos.bib}

\end{document}